\title{Fitness Probability Distribution of Bit-Flip Mutation}
\author{Francisco Chicano$^1$, Andrew M. Sutton$^2$, \\L. Darrell Whitley$^2$ and Enrique Alba$^1$ \\
\\
$^1$University of M\'alaga, M\'alaga, Spain\\
$^2$Colorado State University, CO, USA
}
\date{}
\newlength{\marg}
\newlength{\margv}
\newtheorem{theorem}{Theorem}
\newtheorem{lemma}{Lemma}
\newtheorem{definition}{Definition}
\newtheorem{corollary}{Corollary}
\newtheorem{proposition}{Proposition}
\newcommand{\avg}[1]{\mathop{\mathrm{avg} \{f(y)\}}_{y \in #1}}
\def\vec#1{\mathchoice{\mbox{\boldmath$\displaystyle#1$}}
  {\mbox{\boldmath$\textstyle#1$}}
  {\mbox{\boldmath$\scriptstyle#1$}}
  {\mbox{\boldmath$\scriptscriptstyle#1$}}}
\newcommand{\comb}[2]{\binom{#1}{#2}}
\newcommand{\Bo}[0]{\mathbb{B}}
\newcommand{\Za}[0]{\mathbb{Z}}
\newcommand{\Real}[0]{\mathbb{R}}
\newcommand{\Exp}[1]{E\{#1\}}
\newcommand{\NP}[0]{\mathsf{NP}}
\renewcommand{\P}[0]{\mathsf{P}}
\newcommand{\krawel}[3]{\mathcal{K}^{#1}_{#2,#3}}
\newcommand{\krawelt}[3]{\tilde{\mathcal{K}}^{#1}_{#2,#3}}
\newcommand{\kraw}[1]{\mathcal{K}^{#1}}
\newcommand{\krawt}[1]{\tilde{\mathcal{K}}^{#1}}
\newcommand{\Prob}[1]{\mathop{\mathrm{Pr}\{#1\}}}
\newcommand{\ProbCond}[2]{\mathop{\mathrm{Pr}\{#1|#2\}}}
\newcommand{\Diag}[1]{\mathop{\mathrm{diag}\left(#1\right)}}
\newcommand{\dotproduct}[2]{\left\langle #1,#2 \right\rangle}
\newif\ifshowcomments
\begin{document}

\maketitle

\begin{abstract}
Bit-flip mutation is a common mutation operator for evolutionary algorithms applied to optimize functions over binary strings. In this paper, we develop results from the theory of landscapes and Krawtchouk polynomials to exactly compute the probability distribution of fitness values of a binary string undergoing uniform bit-flip mutation. We prove that this probability distribution can be expressed as a polynomial in $p$, the probability of flipping each bit. We analyze these polynomials and provide closed-form expressions for an easy linear problem (Onemax), and an $\NP$-hard problem, MAX-SAT. We also discuss some implications of the results for runtime analysis.
\end{abstract}

\parskip=0.00in

\section{Introduction}
\label{sec:intro}

Evolutionary algorithms that operate on binary string representations commonly employ the \emph{bit-flip} mutation operator.  This operator acts independently on each bit in a solution and changes the value of the bit (from 0 to 1 and vice versa) with probability $p$, where $p$ is a parameter of the operator.  The most commonly recommended value for this parameter is $p=1/n$ where $n$ is the length of the binary string. For linear functions, this rate is provably optimal~\citep{Witt2013tight}. However, in the general case, very little is currently understood about the mutation operator and its influence on the optimization process.

In this paper we study the operator from the point of view of landscape theory. Using this approach, we provide closed-form formulas for the fitness probability distribution of the solutions obtained after the application of bit-flip mutation to a particular solution. Up to the best of our knowledge, these kind of general and closed-form formulas have not been presented before. We can find, however, some works in which mathematical expressions are provided for this probability distribution in the case of particular problems like Onemax~\citep{Garnier1999}. In this paper we want to be more general and provide a mathematical expression for the probability distribution that separate two elements: the mathematical entity related to the problem, $\vec{F}$ and another one related to the operator $\vec{\Lambda}$. This approach yields a general framework that provides an expression that is valid for any problem as far as we can provide the problem-dependent entity $\vec{F}$.





\cite{Sutton2011gecco} and \cite{Chicano2011gecco}, used landscape theory to provide a closed-form formula for the expectation after a bit-flip mutation (we repeat this result in Section~\ref{subsec:expectation}). In this work we generalize these results and the one by \cite{Sutton2011foga}, providing closed-form formulas to compute the fitness probability distribution. The new result provides a deeper understanding of the behavior of the mutation operator. We illustrate the approach by providing concrete expressions for the moments of the probability distribution for two well-known problems: Onemax and MAX-SAT. 

Many results coming from the field of fitness landscape analysis provide exact results for the expected fitness of solutions undergoing uniform transformations by evolutionary operators. However, they often cannot say anything about selection operators because the framework does not easily handle the probability of obtaining an improving mutation. This quantity is much harder to derive because it always depends on some instance-dependent structure that is ignored in such analyses. In this paper, we work out how to address the problem of selection by separating the instance-dependent structure from the instance-independent structure. This separation allows us to derive initial results on the probability of producing an improving offspring. As a consequence, we illustrate a way to use the theory of landscapes to derive the expected runtime of a $(1+\lambda)$ EA without crossover on Onemax.

The remainder of the paper is organized as follows. In the next section the mathematical tools required to understand the rest of the paper are presented. In Section~\ref{sec:mutation} we present our main contribution of this work: the landscape analysis of bit-flip mutation and the closed-form formulas for the fitness probability distribution. Section~\ref{sec:study} provides particular results for two well-known problems in the domain of combinatorial optimization: Onemax and MAX-SAT. Section~\ref{sec:applications} proves the connection between the results in this paper and the runtime analysis of a $(1+\lambda)$ EA. Finally, Section~\ref{sec:conclusions} presents the conclusions and future work.

\section{Background}
\label{sec:background}

In this section we present some fundamental results of landscape theory. We will only focus on the relevant information required to understand the rest of the paper. We refer the reader interested in a deeper exposition of this topic to the survey by \cite{Reidys2002}.

A \emph{landscape} for a combinatorial optimization problem is a triple $(X,N,f)$, where $X$ is a finite or countable solution set, $f : X \rightarrow \mathbb{R}$ defines the objective function and $N$ is a \emph{neighborhood function} that maps any solution $x \in X$ to the set $N(x)$ of points reachable from $x$. If $y \in N(x)$ then $y$ is a neighbor of $x$.
The pair $(X,N)$ is called \emph{configuration space} and can be represented using a graph $G = (X,E)$ in which $X$ is the set of vertices and a directed edge $(x,y)$ exists in $E$ if $y \in N(x)$ \citep{Biyikoglu2007}.  We can represent the neighborhood operator by its adjacency matrix
\begin{equation}
\vec{A}_{x,y} = \left\{
\begin{array}{ll}
1 & \mbox{if $y \in N (x)$}, \\
0 & \mbox{otherwise.}
\end{array}
\right.
\end{equation}

Any discrete function, $f$, defined over the set of candidate solutions can be characterized as a vector in $\mathbb{R}^{|X|}$. Any $|X| \times |X|$ matrix can be interpreted as a linear map that acts on vectors in $\mathbb{R}^{|X|}$. For example, the adjacency matrix $\vec{A}$ acts on function $f$ as follows
\begin{equation}
\vec{A} ~ f = \left(
\begin{array}{c}
\sum_{y \in N(x_1)} f(y) \\
\sum_{y \in N(x_2)} f(y) \\
\vdots \\
\sum_{y \in N(x_{|X|})} f(y)
\end{array}
\right).
\end{equation}

The component $x$ of this matrix-vector product can thus be written
as:
\begin{equation}
\label{eqn:matrix-vector} (\vec{A} ~ f) (x) = \sum_{y \in N(x)} f(y),
\end{equation}
which is the sum of the function value of all the neighbors of $x$. In the case of binary strings, the minimal-change neighborhood at a point $x$ is the set of Hamming neighbors of $x$. The Hamming neighborhood induces a regular, connected graph $G =(X,E)$, meaning that $G$ is connected and $|N (x)| = d > 0$ for a constant $d$, for all $x \in X$. When a neighborhood is regular, the so-called \emph{Laplacian matrix} is defined as $\vec{\Delta} = \vec{A} - d \vec{I}$. This corresponds to the Laplacian of the graph $G$. Stadler defines the class of \emph{elementary landscapes} where the function $f$ is an eigenvector (or eigenfunction) of the Laplacian up to an additive constant~\citep{Stadler1995landscapes}. Formally, we have the following.

\begin{definition}
Let $(X, N, f)$ be a landscape and $\vec{\Delta}$ be the Laplacian matrix of the configuration space graph. The landscape is said to be elementary if there exists a constant $b$ that we call the \emph{offset}, and an eigenvalue $\lambda$ of $-\vec{\Delta}$ such that $(-\vec{\Delta}) (f-b) = \lambda (f-b)$. 
\end{definition}

We use eigenvalues of $-\vec{\Delta}$ instead of $\vec{\Delta}$ to have positive eigenvalues~\citep{Biyikoglu2007}. In connected neighborhoods like the Hamming neighborhood, the offset $b$ is the average value of the function $f$ evaluated over the entire search space: $b=\bar{f}$. In elementary landscapes, the average value $\bar{f}$ can be usually computed in a very efficient way using the problem data. That is, it is not required to do a complete enumeration over the search space. For a concrete example on the TSP the reader is referred to~\cite{Whitley2008}.

Suppose $(X,N,f)$ is elementary with eigenvalue $\lambda$. For any scalars $a$ and $b$, define the function $g : X \to \Real$ as $g(x) = af(x) + b$. Clearly, $(X,N,g)$ is also elementary with the same eigenvalue $\lambda$. Furthermore, in regular neighborhoods, if $g$ is an eigenfunction of $-\vec{\Delta}$ with eigenvalue $\lambda$ then $g$ is also an eigenfunction of $\vec{A}$ (the adjacency matrix of the configuration space graph $G$) with eigenvalue $d-\lambda$. The average value of the fitness function in the neighborhood of a solution can be computed using the expression:

\begin{equation}
\label{eqn:avg} \avg{N(x)} = \frac{1}{d} (\vec{A} ~f) (x).
\end{equation}
If $(X,N,f)$ is elementary with eigenvalue $\lambda$, then the average over the neighborhood is computed as:
\begin{align}
\nonumber \avg{N(x)} &= \mathop{\mathrm{avg}}_{y \in N(x)} \{f(y)-\bar{f}\} + \bar{f} \\
\nonumber &= \frac{1}{d} (\vec{A} ~(f-\bar{f})) (x) + \bar{f}= \frac{d-\lambda}{d} (f(x)-\bar{f}) + \bar{f} \\
&= f(x) + \frac{\lambda}{d} (\bar{f} -f(x)),
\end{align}
which is sometimes referred to as Grover's wave equation~\citep{Grover1992local}. In the previous expression we used the fact that $f-\bar{f}$ is an eigenfunction of $\vec{A}$ with eigenvalue
$d-\lambda$.

The wave equation makes it possible to compute the average value of the fitness function $f$ evaluated over all of the neighbors of $x$ using only the value $f(x)$.
The previous average can be interpreted as the expected value of the objective function when a random neighbor of $x$ is selected using a uniform distribution. This is exactly the behavior of the so-called \emph{1-bit-flip} mutation~\citep{Garnier1999}.

A landscape $(X,N,f)$ is not always elementary, but even in this case it is possible to characterize the function $f$ as the sum of elementary landscapes, called \emph{elementary components} of the landscape. The interested reader can find examples of elementary landscapes in~\cite{Whitley2008,WhitleySutton2009} and can find more on the elementary landscape decomposition in~\cite{Chicano2011ecj}.

\subsection{Binary Hypercube}
\label{subsec:binary}

The previous definitions are general concepts of landscape theory. Let us focus now on the binary configuration spaces with the Hamming neighborhood, the so-called \emph{binary hypercubes}, which are the configuration spaces we need in the analysis of bit-flip mutation. Let us first present the notation. In these spaces the solution set $X$ is the set of all binary strings of size $n$, formally, $\Za_2^n=\Bo^n$. The solution set form an Abelian group with the component-wise sum in $\Za_2$ (exclusive OR), denoted with $\oplus$. Given an element $z \in \Bo^n$, we will denote with $|z|$ the number of ones of $z$. Given a set of binary strings $W$ and a binary string $u$ we denote with $W \wedge u$ the set of binary strings that can be computed as the bitwise AND of a string in $W$ and $u$, that is, $W \wedge u = \{w \wedge u | w \in W\}$. For example, $\Bo^4 \wedge 0101 = \{0000, 0001, 0100, 0101\}$. We will denote with $\underline{i}$ the binary string with position $i$ set to 1 (starting from the leftmost position) and the rest set to 0. We omit the length of the string $n$ in the notation, but it will be clear from the context. For example, if we are considering binary strings in $\Bo^4$ we have $\underline{1}=1000$ and $\underline{3}=0010$.

It is convenient to characterize the neighborhood by a set of group elements $S=\{s_1,s_2,\ldots,s_d\}$ that generate the entire group. Here $S$ is called a \emph{generating set}. The neighborhood of a solution $x$ is just the set $N(x) = x \oplus S = \{x \oplus s | s \in S\}$. In the binary hypercube, two solutions $x$ and $y$ are neighbors if one can be obtained from the other by flipping a single bit, that is, if the Hamming distance between the solutions, $|x\oplus y|$, is 1. Thus, the generating set is composed of every binary string with a single 1: $S_1=\{s \in \Bo^n \mid |s|=1\}$.

We define the sphere of radius $r$  around a solution $x$ as the set of all solutions lying at Hamming distance $r$ from $x$~\citep{Sutton2010}. We are also interested in these spheres since the probability of reaching a solution $y$ from a solution $x$ using the bit-flip mutation operator is the same for all the solutions in a sphere around $x$. Now we can observe that the solutions in a sphere of radius $r$ around $x$ can be thought as the neighborhood $N_r$ of $x$ generated by an appropriate generating set $S_r$. The generating set is composed of all the solutions having exactly $r$ 1s: $S_r=\{s \in \Bo^n \mid |s|=r\}$. The notation $S_1$ used before was selected to be a particular case of this more general neighborhood. We will use the notation $N_r(x)=x \oplus S_r$. Another particular case is the one of $S_0=\{0\}$ that generates the identity neighborhood $N_0(x)=\{x\}$. Each neighborhood has its corresponding adjacency matrix denoted with $\vec{A}^{(r)}$.


Let us consider the set of all the pseudo-Boolean functions defined over $\Bo^n$, $\Real^{\Bo^n}$. We can think of one pseudo-Boolean function as an array of $2^n$ real numbers, each one being the function evaluation of a particular binary string of $\Bo^n$. Each pseudo-Boolean function is, thus, a particular vector in a vector space with $2^n$ dimensions. Let us define the dot-product between two pseudo-Boolean functions as:
\begin{equation}
\dotproduct{f}{g} = \sum_{x \in \Bo^n} f(x) g(x).
\end{equation}

Now we introduce a set of functions that will be relevant for our purposes in the next sections: the \emph{Walsh functions}~\citep{Walsh1923}

\begin{definition}
The (non-normalized) Walsh function with parameter $w \in \Bo^n$ is a pseudo-Boolean function defined over $\Bo^n$ as:
\begin{equation}
\label{eqn:walsh-def}
\psi_{w}(x) = \prod_{i=1}^{n} (-1)^{w_i x_i} = (-1)^{\sum_{i=1}^n w_i x_i},
\end{equation}
where the subindex in $w_i$ and $x_i$ denotes the $i$-th component of the binary strings $w$ and $x$, respectively.
\end{definition}

We can observe that the Walsh functions map $\Bo^n$ to the set $\{-1,1\}$. We define the \emph{order} of a Walsh function $\psi_w$ as the value $|w|$. Some properties of the Walsh functions are given in the following proposition. A proof of these properties can be found in~\cite{Vose1999}.

\begin{proposition}
Let us consider the Walsh functions defined over $\Bo^n$. The following identities hold:
\begin{align}
\label{eqn:walsh-zero}\psi_{0} &= 1, \\
\label{eqn:walsh-oplus} \psi_{w \oplus t} &= \psi_w  \psi_{t} ,\\
\label{eqn:sum-arg}\psi_{w}(x \oplus y) &= \psi_{w}(x) \psi_{w}(y), \\
\label{eqn:walsh-swap}\psi_{w}(x) &= \psi_{x}(w), \\
\label{eqn:walsh-square}\psi_{w}^2 &= 1, \\
\label{eqn:walsh-sum} \sum_{x \in \Bo^n} \psi_{w}(x) &= 2^n \delta_0^{|w|} 
= \left\{ \begin{array}{ll} 2^n & \mbox{if $w=0$,} \\ 0 & \mbox{if $w \neq 0$,}\end{array}\right.\\
\label{eqn:walsh-bit}\psi_{\underline{i}}(x) &= (-1)^{x_i} = 1 -2 x_i, \\
\label{eqn:walsh-dot}\dotproduct{\psi_{w}}{\psi_{t}} &= 2^n \delta_{w}^t,
\end{align}
where $\delta$ denotes the Kronecker delta.
\end{proposition}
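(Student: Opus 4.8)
The plan is to derive every identity from the exponential form $\psi_w(x) = (-1)^{\sum_{i=1}^n w_i x_i}$ given in the definition, exploiting that the exponent is symmetric and additive in its two binary arguments. I would first dispatch the three identities that need no real work. Setting $w=0$ makes the exponent vanish, giving $\psi_0(x)=(-1)^0=1$ and hence \eqref{eqn:walsh-zero}; the exponent $\sum_i w_i x_i$ is manifestly invariant under swapping $w$ and $x$, which is \eqref{eqn:walsh-swap}; and taking $w=\underline{i}$ collapses the sum to the single term $x_i$, so $\psi_{\underline{i}}(x)=(-1)^{x_i}$, after which checking the two cases $x_i\in\{0,1\}$ confirms $(-1)^{x_i}=1-2x_i$, which is \eqref{eqn:walsh-bit}.

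For the two multiplicative identities \eqref{eqn:walsh-oplus} and \eqref{eqn:sum-arg} I would reduce to a single per-bit statement. The crucial observation is that although $w_i\oplus t_i$ is computed modulo $2$, we still have $(-1)^{(w_i\oplus t_i)x_i}=(-1)^{w_i x_i}(-1)^{t_i x_i}$ for every bit: indeed $w_i\oplus t_i=w_i+t_i-2w_i t_i$ as integers, and the extra factor $(-1)^{-2w_i t_i x_i}$ is always $1$. Multiplying this identity over all $i$ (equivalently, splitting the exponent $\sum_i(w_i\oplus t_i)x_i$) yields \eqref{eqn:walsh-oplus}; the same argument with the roles of the arguments exchanged, using $x_i\oplus y_i=x_i+y_i-2x_i y_i$, yields \eqref{eqn:sum-arg}. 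Identity \eqref{eqn:walsh-square} is then immediate, either by squaring a $\pm1$ value directly or by instantiating \eqref{eqn:walsh-oplus} at $t=w$ and using $w\oplus w=0$ together with \eqref{eqn:walsh-zero}.

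The only identity that needs a genuine idea is the summation \eqref{eqn:walsh-sum}, and I expect this to be the main obstacle. The key step is to turn the product form $\psi_w(x)=\prod_{i=1}^n(-1)^{w_i x_i}$ into a factorization of the sum over the whole cube into a product of independent single-bit sums,
\begin{equation*}
\sum_{x\in\Bo^n}\psi_w(x)=\prod_{i=1}^n\Bigl(\sum_{x_i\in\{0,1\}}(-1)^{w_i x_i}\Bigr),
\end{equation*}
which is valid because each coordinate of $x$ ranges freely and independently over $\{0,1\}$. Each inner sum equals $2$ when $w_i=0$ and $1+(-1)=0$ when $w_i=1$, so the product is $2^n$ if $w=0$ and $0$ as soon as some $w_i=1$, i.e. whenever $w\neq 0$; this is exactly $2^n\delta_0^{|w|}$. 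Finally, \eqref{eqn:walsh-dot} follows by combining the pieces: expanding the dot product and applying \eqref{eqn:walsh-oplus} gives $\dotproduct{\psi_w}{\psi_t}=\sum_{x\in\Bo^n}\psi_{w\oplus t}(x)$, and then \eqref{eqn:walsh-sum} evaluates this to $2^n\delta_0^{|w\oplus t|}$, which equals $2^n\delta_w^t$ since $w\oplus t=0$ exactly when $w=t$.
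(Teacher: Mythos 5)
Your proof is correct, but note that the paper itself does not prove this proposition at all: it simply refers the reader to Vose (1999) for these properties. Your argument therefore supplies a self-contained proof where the paper delegates to a citation, so there is no paper-internal proof to compare it against. The route you take is the standard one and every step is sound: the per-bit identity $w_i \oplus t_i = w_i + t_i - 2 w_i t_i$ (so that the correction factor $(-1)^{-2 w_i t_i x_i}$ is always $1$) correctly handles the only delicate point in the two multiplicative identities, namely that $\oplus$ is addition modulo $2$ rather than integer addition; the factorization of the sum over $\Bo^n$ into a product of $n$ independent single-bit sums is exactly the right idea for the summation identity; and reducing orthogonality to that summation identity via $\psi_w \psi_t = \psi_{w \oplus t}$ together with the observation that $|w \oplus t| = 0$ if and only if $w = t$ closes the argument. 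The logical ordering is also consistent, since nothing is invoked before it is established. In short: a complete and correct elementary proof of a result the paper only cites.
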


There exist $2^n$ Walsh functions in $\Bo^n$ and according to (\ref{eqn:walsh-dot}) they are orthogonal, so they form a basis of the set of pseudo-Boolean functions. Any arbitrary pseudo-Boolean function $f$ can be expressed as a weighted sum of Walsh functions. We can represent $f$ in the Walsh basis in the following way:
\begin{equation}
f(x) = \sum_{w \in \Bo^n} a_w \psi_{w}(x),
\end{equation}
where the \emph{Walsh coefficients} $a_w$ are defined as:
\begin{equation}
\label{eqn:walsh-coeff}
a_w = \frac{1}{2^n} \dotproduct{\psi_w}{f}.
\end{equation}
The previous expression is called the \emph{Walsh expansion} (or decomposition) of $f$. The interested reader can refer to the text by \citet{Terras1999} for a deeper treatment of Walsh functions and their properties.

The reason why Walsh functions are so important for the mutation analysis is because they are eigenvectors of the adjacency matrices $\vec{A}^{(r)}$ defined above, as the next proposition proves.

\begin{proposition}
\label{prop:eigenvalue-walsh}
In $\Bo^n$, the Walsh function $\psi_w$ defined in (\ref{eqn:walsh-def}) is an eigenvector of the adjacency matrix $\vec{A}^{(r)}$ based on the generating set $S_r$ (sphere of radius $r$) with eigenvalue
\begin{equation}
\label{eqn:eigenvalue-walsh}
\sum_{s \in S_r} \psi_w (s) = \krawel{(n)}{r}{|w|},
\end{equation}
where $\krawel{(n)}{r}{j}$ is the $(r,j)$ element of the so-called $n$-th order Krawtchouk matrix $\kraw{(n)}$, defined as:
\begin{equation}
\label{eqn:kra-def}
\krawel{(n)}{r}{j} = \sum_{l=0}^{n} (-1)^ l \comb{n-j}{r-l}\comb{j}{l},
\end{equation}
for $0\leq r,j \leq n$. We assume in the previous expression that $\comb{a}{b} =0$ if $b > a$ or $b < 0$. 
\end{proposition}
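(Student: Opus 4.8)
The plan is to prove the proposition in two stages: first establish that $\psi_w$ is an eigenvector of $\vec{A}^{(r)}$ and identify its eigenvalue as $\sum_{s \in S_r} \psi_w(s)$, then evaluate that sum combinatorially and recognize it as the Krawtchouk matrix entry.

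For the first stage, I would apply the matrix-vector formula (\ref{eqn:matrix-vector}) with the neighborhood $N_r$, so that $(\vec{A}^{(r)} \psi_w)(x) = \sum_{y \in N_r(x)} \psi_w(y) = \sum_{s \in S_r} \psi_w(x \oplus s)$. Invoking the additivity property (\ref{eqn:sum-arg}), $\psi_w(x \oplus s) = \psi_w(x)\psi_w(s)$, I can factor $\psi_w(x)$ out of the sum to obtain $(\vec{A}^{(r)}\psi_w)(x) = \psi_w(x)\sum_{s \in S_r}\psi_w(s)$. Since this holds for every $x$, the function $\psi_w$ is an eigenvector with eigenvalue $\sum_{s \in S_r}\psi_w(s)$, which is the first equality claimed in (\ref{eqn:eigenvalue-walsh}). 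This part is routine given the earlier properties.

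The substantive step is evaluating the eigenvalue sum and matching it to (\ref{eqn:kra-def}). Writing $j = |w|$ and using the definition (\ref{eqn:walsh-def}), each term satisfies $\psi_w(s) = (-1)^{\sum_i w_i s_i} = (-1)^{|w \wedge s|}$, so its value depends only on $l = |w \wedge s|$, the number of ones that $s$ places inside the support of $w$. I would then partition $S_r$ according to this quantity: a string $s$ with $|s| = r$ and $|w \wedge s| = l$ is obtained by choosing $l$ of the $j$ support positions of $w$ and $r - l$ of the remaining $n - j$ positions, giving $\comb{j}{l}\comb{n-j}{r-l}$ such strings, each contributing $(-1)^l$. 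Summing over $l$ yields $\sum_{s \in S_r}\psi_w(s) = \sum_{l=0}^n (-1)^l \comb{n-j}{r-l}\comb{j}{l}$, which is precisely $\krawel{(n)}{r}{j}$; the convention that out-of-range binomials vanish makes the full range $0 \le l \le n$ harmless.

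The main obstacle is the combinatorial bookkeeping in this last step, namely keeping straight that the ones of $s$ split between the $j$ positions of the support of $w$ (each flipped bit contributing a sign) and the $n - j$ positions outside it (sign-neutral). It is also worth remarking, as a sanity check built into the argument, that the eigenvalue depends on $w$ only through $|w|$: this is forced by the invariance of $S_r$ under coordinate permutations, and it is exactly why the Krawtchouk entry is indexed by $|w|$ rather than by $w$ itself.
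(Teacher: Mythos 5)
Your proposal is correct and follows essentially the same two-stage argument as the paper: factoring out $\psi_w(x)$ via the identity $\psi_w(x \oplus s) = \psi_w(x)\psi_w(s)$ to identify the eigenvalue, then evaluating $\sum_{s \in S_r}\psi_w(s)$ by partitioning $S_r$ according to $l = |w \wedge s|$ and counting $\comb{|w|}{l}\comb{n-|w|}{r-l}$ strings per class. The paper's proof is the same argument with the same counting, so there is nothing to reconcile.
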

\begin{proof}
The Walsh function $\psi_w$ is an eigenvector of $\vec{A}^{(r)}$ if $\vec{A}^{(r)} \psi_w = \lambda \psi_w$ for some constant $\lambda$, which is the eigenvalue. Taking into account the definition of neighborhood based on the generating set $S_r$ we can write:
\begin{align}
\left(\vec{A}^{(r)} \psi_w\right)(x) 
&= \sum_{s \in S_r} \psi_w (x\oplus s) 
= \sum_{s \in S_r} \psi_w(x) \psi_w (s) 
= \left( \sum_{s \in S_r} \psi_w(s) \right)\psi_w(x), \nonumber
\end{align}
where we used the property (\ref{eqn:sum-arg}) and we can identify the eigenvalue with the left hand side of (\ref{eqn:eigenvalue-walsh}). Let us now prove that this value is exactly $\krawel{(n)}{r}{|w|}$. Using the definition of $S_r$ we can write the series as:
\begin{align}
\sum_{s \in S_r} \psi_w(s) &= \sum_{s \in \Bo^n \atop |s|=r} \psi_w(s) = \sum_{s \in \Bo^n \atop |s|=r} (-1)^{|w \wedge s|},
\end{align}
and we can now change the index of the sum from $s$ to $l=|w \wedge s|$. Written with the new index we only need to count for each $l$ how many binary strings $s \in S_r$ have the property that $|w \wedge s|=l$, that is:
\begin{align}
\label{eqn:card}
\sum_{s \in \Bo^n \atop |s|=r} (-1)^{|w \wedge s|} = \sum_{l=0}^n (-1)^l \left| \{s \in \Bo^n| |s|=r \text{ and } |w \wedge s|=l\} \right|.
\end{align}

Now we can compute the cardinality of the inner set in (\ref{eqn:card}) using counting arguments. We need to count how many ways we can distribute the $r$ 1s in the string $s$ such that they coincide with the 1s of $w$ in exactly $l$ positions. In order to do this, first let us put $l$ 1s in the positions where $w$ has 1. We can do this in $\comb{|w|}{l} $ different ways. Now, let us put the remaining $r-l$ 1s in the positions where $w$ has 0. We can do this in $\comb{n-|w|}{r-l}$ ways. Multiplying both numbers we have the desired cardinality:
\begin{equation}
\label{eqn:card2}
\left| \{s \in \Bo^n| |s|=r \text{ and } |w \wedge s|=l\} \right| = \comb{|w|}{l} \comb{n-|w|}{r-l}.
\end{equation}

We should notice here that the cardinality is zero in some cases. This happens when $l > |w|$, $l>r$ or $r-l > n-|w|$. However, in these cases we defined the binomial coefficient to be zero and we can keep the previous expression. If we use (\ref{eqn:card2}) in (\ref{eqn:card}) and take into account the definition (\ref{eqn:kra-def}) we get (\ref{eqn:eigenvalue-walsh}).
\end{proof}

In (\ref{eqn:eigenvalue-walsh}) we can observe that the eigenvalue depends only on the order $|w|$ of the Walsh function. This means that there are at most $n+1$ different eigenvalues in the considered adjacency matrices. As a consequence, we can decompose any arbitrary function $f$ as a sum of $n+1$ functions, called \emph{elementary components} of $f$, where each one is an eigenvector of all the adjacency matrices. 

\begin{definition}
Let $f:\Bo^n \rightarrow \Real$ be a pseudo-Boolean function with Walsh expansion $f= \sum_{w \in \Bo^n} a_w \psi_w$, we define the order-$j$ elementary component of $f$ as:
\begin{equation}
f_{[j]} = \sum_{w \in \Bo^n \atop |w|=j} a_w \psi_w,
\end{equation}
for $0\leq j \leq n$. As a consequence of the Walsh expansion of $f$ we can write:
\begin{equation}
f = \sum_{j=0}^{n} f_{[j]} .
\end{equation}
\end{definition}

According to Proposition~\ref{prop:eigenvalue-walsh} the elementary component $f_{[j]}$ is an eigenvector of $\vec{A}^{(r)}$ with eigenvalue $\krawel{(n)}{r}{j}$.

\subsection{Krawtchouk Matrices}

Krawtchouk matrices play a relevant role in the mathematical developments of the next sections. For this reason we present here some of their properties. The reader interested in these matrices (also considered polynomials) can read~\cite{Feinsilver2005}. The $n$-th order Krawtchouk matrix is an $(n+1)\times(n+1)$ integer matrix with indices between $0$ and $n$. In (\ref{eqn:kra-def}) we provided an explicit definition of the elements of a Krawtchouk matrix. But these elements can also be implicitly defined with the help of the following generating function:
\begin{equation}
\label{eqn:kr-generating}
(1+x)^{n-j}(1-x)^{j} = \sum_{r=0}^n x^r \krawel{(n)}{r}{j}.
\end{equation}

From (\ref{eqn:kr-generating}) we deduce that $\krawel{(n)}{0}{j} = 1$. Observe that $\krawel{(n)}{0}{j}$ is the constant coefficient in the polynomial. Other properties of the Krawtchouk matrices are presented in the next proposition.
\begin{proposition}
\label{prop:kraw-col}
We have the following identities between the elements of the Krawtchouk matrices:
\begin{equation}
\label{eqn:kraw-col}
\krawel{(n)}{r}{n-j} = (-1)^r  \krawel{(n)}{r}{j} ,
\end{equation}
\begin{equation}
\label{eqn:kraw-row}
\krawel{(n)}{n-r}{j} = (-1)^j  \krawel{(n)}{r}{j} .
\end{equation}

\end{proposition}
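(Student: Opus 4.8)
The plan is to work entirely from the generating function (\ref{eqn:kr-generating}), which characterizes $\krawel{(n)}{r}{j}$ as the coefficient of $x^r$ in the polynomial $P_j(x) = (1+x)^{n-j}(1-x)^{j}$. Both identities will then follow by applying a single algebraic substitution to this polynomial and matching coefficients of like powers of $x$. This is cleaner than manipulating the explicit summation (\ref{eqn:kra-def}) directly, since the binomial-coefficient bookkeeping there would obscure the underlying symmetry.

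For the column identity (\ref{eqn:kraw-col}), first I would observe that replacing $j$ by $n-j$ in the generating function produces $(1+x)^{j}(1-x)^{n-j}$, and that this is exactly $P_j(-x)$, since substituting $x \mapsto -x$ in $P_j$ simply swaps the roles of the factors $(1+x)$ and $(1-x)$. Expanding $P_j(-x) = \sum_{r=0}^{n} (-1)^r x^r \krawel{(n)}{r}{j}$ and comparing with the generating-function expansion of $\krawel{(n)}{r}{n-j}$, the coefficient of $x^r$ yields (\ref{eqn:kraw-col}) at once.

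For the row identity (\ref{eqn:kraw-row}), the key step is to pass to the reciprocal polynomial $x^n P_j(1/x)$. A short calculation gives $x^n P_j(1/x) = (1+x)^{n-j}(x-1)^{j} = (-1)^j P_j(x)$, where the sign $(-1)^j$ arises from factoring $-1$ out of each of the $j$ copies of $(x-1)$. Expanding the left-hand side as $x^n \sum_{r=0}^{n} x^{-r}\krawel{(n)}{r}{j} = \sum_{r=0}^{n} x^{n-r}\krawel{(n)}{r}{j}$, reindexing by $r \mapsto n-r$, and comparing the coefficient of $x^r$ on both sides of $x^n P_j(1/x) = (-1)^j P_j(x)$ produces (\ref{eqn:kraw-row}).

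I do not anticipate a genuine obstacle: once the generating function is in hand, both identities reduce to elementary manipulations. The only points requiring minor care are the reindexing of the sum in the second part (verifying that $r$ ranging over $0,\dots,n$ corresponds to $n-r$ ranging over the same set) and confirming that $P_j$ has degree exactly $n$, so that the reciprocal-polynomial argument is legitimate; the latter holds because $P_j(x)$ is a product of $n-j$ and $j$ linear factors, for a total degree of $n$.
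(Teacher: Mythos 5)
Your proposal is correct and follows essentially the same route as the paper's own proof: both identities are derived from the generating function $(1+x)^{n-j}(1-x)^j$, using the substitution $x \mapsto -x$ for (\ref{eqn:kraw-col}) and the reciprocal-polynomial manipulation $x^n P_j(1/x) = (-1)^j P_j(x)$ with coefficient reindexing for (\ref{eqn:kraw-row}). The only difference is presentational (you name the polynomial $P_j$ and phrase the second identity via the reciprocal polynomial, while the paper chains the same equalities inline), so there is nothing substantive to distinguish the two arguments.
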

\begin{proof}
With the help of the generating function (\ref{eqn:kr-generating}) we can write:
\begin{align}
\sum_{r=0}^n (-x)^r \krawel{(n)}{r}{j} &= (1+(-x))^{n-j}(1-(-x))^{j} \nonumber \\
&= (1+x)^{j}(1-x)^{n-j} = \sum_{r=0}^n x^r \krawel{(n)}{r}{n-j}, \nonumber
\end{align}
and identifying the coefficients of the first and last polynomials we have (\ref{eqn:kraw-col}). In order to prove (\ref{eqn:kraw-row}) we can write:
\begin{align}
\sum_{r=0}^n x^r \krawel{(n)}{r}{j} &= (1+x)^{n-j}(1-x)^{j} = (-1)^j(x+1)^{n-j}(x-1)^{j} \nonumber \\
&= (-1)^j x^n (1+1/x)^{n-j}(1-1/x)^{j} = (-1)^j x^n \sum_{r=0}^n (1/x)^r \krawel{(n)}{r}{j} \nonumber \\
&= (-1)^j \sum_{r=0}^n x^{n-r} \krawel{(n)}{r}{j} = (-1)^j \sum_{r=0}^n x^{r} \krawel{(n)}{n-r}{j}, \nonumber
\end{align}
identifying again the coefficients of the first and last polynomials we have (\ref{eqn:kraw-row}).
\end{proof}

Krawtchouk matrices also appear when we sum Walsh functions. The following proposition provides an important result in this line.
\begin{proposition}
\label{prop:krawtchouk}
Let $t \in \Bo^n$ be a binary string and $0 \leq r \leq n$. Then the following two identities hold for the sum of Walsh functions:
\begin{align}
\label{eqn:sum-walsh-r}\sum_{w \in \Bo^n \wedge t \atop |w| = r} \psi_{w}(x) &= \krawel{(|t|)}{r}{|x \wedge t|} , \\
\label{eqn:sum-walsh}\sum_{w \in \Bo^n \wedge t} \psi_{w}(x) &= 2^{|t|} \delta_0^{|x \wedge t|} .
\end{align}
\end{proposition}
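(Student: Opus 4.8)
The plan is to exploit the fact that the index set $\Bo^n \wedge t$ is precisely the set of binary strings whose support is contained in the support of $t$, so that summing over it amounts to summing over a hypercube of reduced dimension $|t|$. Concretely, restricting every string to the $|t|$ positions where $t$ has a one sets up a bijection between $\Bo^n \wedge t$ and $\Bo^{|t|}$. I would first check that this restriction preserves all the quantities appearing in the statement: the Hamming weight $|w|$ is unchanged, and since the support of any $w \in \Bo^n \wedge t$ lies inside that of $t$ we have $|w \wedge x| = |w \wedge (x \wedge t)|$, so $\psi_w(x)$ equals the Walsh function in $\Bo^{|t|}$ evaluated at the restriction of $x$, whose weight is exactly $|x \wedge t|$.

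With this identification in hand, the first identity (\ref{eqn:sum-walsh-r}) reduces to a statement living entirely in $\Bo^{|t|}$. After applying the swap property (\ref{eqn:walsh-swap}) to exchange the roles of the summation variable and the argument, the sum over all strings of weight $r$ in $\Bo^{|t|}$ of a single fixed Walsh function is exactly the eigenvalue computed in Proposition~\ref{prop:eigenvalue-walsh}, but in dimension $|t|$ rather than $n$; this yields $\krawel{(|t|)}{r}{|x \wedge t|}$ immediately. As an alternative that avoids the explicit bijection, I could redo the counting argument of Proposition~\ref{prop:eigenvalue-walsh} directly: writing the sum as $\sum_{l}(-1)^l$ times the number of weight-$r$ strings $w$ supported on $t$ with $|w \wedge x| = l$, and splitting the $|t|$ available positions into the $|x \wedge t|$ where $x$ is one and the $|t|-|x\wedge t|$ where $x$ is zero, produces $\sum_l (-1)^l \comb{|x\wedge t|}{l}\comb{|t|-|x\wedge t|}{r-l}$, which matches the definition (\ref{eqn:kra-def}) of $\krawel{(|t|)}{r}{|x\wedge t|}$.

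For the second identity (\ref{eqn:sum-walsh}) I see two routes. The quickest is to sum the first identity over all admissible values $0 \le r \le |t|$ and then evaluate the generating function (\ref{eqn:kr-generating}) of the $|t|$-th order Krawtchouk matrix at the value $1$ of its indeterminate: this gives $2^{|t|-|x\wedge t|}\,0^{|x\wedge t|}$, which collapses to $2^{|t|}\delta_0^{|x\wedge t|}$. Alternatively, under the same bijection the unrestricted sum becomes the full sum of a single Walsh function over $\Bo^{|t|}$, which equals $2^{|t|}\delta_0^{|x\wedge t|}$ directly by property (\ref{eqn:walsh-sum}).

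I expect the only real subtlety to be bookkeeping rather than any genuine difficulty: one must be careful that the order of the relevant Krawtchouk matrix collapses from $n$ to $|t|$ and that its second index becomes $|x\wedge t|$ rather than $|x|$, both consequences of restricting the summation to strings supported on $t$. Verifying that the restriction map is a weight- and Walsh-value-preserving bijection (equivalently, that $|w\wedge x|=|w\wedge(x\wedge t)|$ whenever $w$ is supported on $t$) is the one place where care is needed; everything else is a direct appeal to results already established in the excerpt.
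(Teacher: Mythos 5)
Your proposal is correct and follows essentially the same route as the paper's own proof: the restriction (masking) bijection between $\Bo^n \wedge t$ and $\Bo^{|t|}$, followed by an appeal to the swap property and Proposition~\ref{prop:eigenvalue-walsh} for (\ref{eqn:sum-walsh-r}) and to property (\ref{eqn:walsh-sum}) for (\ref{eqn:sum-walsh}); you are in fact more explicit than the paper about verifying that $|w \wedge x| = |w \wedge (x \wedge t)|$ when $w$ is supported on $t$. The alternative routes you sketch (the direct counting argument and the evaluation of the generating function (\ref{eqn:kr-generating}) at $1$) are also valid but are not needed.
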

\begin{proof}
Given two binary strings $x, t \in \Bo^n$, let us denote with $x|_t$ the binary string of length $|t|$ composed of all the bits of $x$ in the positions $i$ where $t_i=1$. The string $t$ acts as a mask for $x$. This notation allows us to simplify the sums in (\ref{eqn:sum-walsh-r}) and (\ref{eqn:sum-walsh}):
\begin{align}
\sum_{w \in \Bo^n \wedge t \atop |w| = r} \psi_{w}(x) &= 
\sum_{w \in \Bo^n \wedge t \atop |w| = r} \psi_{w|_t}(x|_t) =
\sum_{u \in \Bo^{|t|} \atop |u| = r} \psi_{u}(x|_t) = \sum_{u \in S_r} \psi_{u}(x|_t) = \krawel{(|t|)}{r}{|x \wedge t|} ~~~~ \text{by (\ref{eqn:eigenvalue-walsh}),} \nonumber \\
\sum_{w \in \Bo^n \wedge t} \psi_{w}(x) &= 
\sum_{w \in \Bo^n \wedge t} \psi_{w|_t}(x|_t) =
\sum_{u \in \Bo^{|t|}} \psi_{u}(x|_t) = \sum_{u \in \Bo^{|t|}} \psi_{x|_t}(u) = 2^{|t|} \delta_0^{|x \wedge t|} ~~~~ \text{by (\ref{eqn:walsh-sum})} . \nonumber 
\end{align}

\end{proof}


\section{Analysis of the Mutation Operator}
\label{sec:mutation}

The bit-flip mutation operator transforms an arbitrary element $x \in \Bo^n$ to $y \in \Bo^n$ by changing the value of each bit of $x$ with probability $p$. In the literature it is common to use the value $p=1/n$ that, in expectation, changes one bit in each solution. However, if $0 < p < 1$, the mutation operator can transform $x$ into any element of the search space with positive probability. In the following, we denote with $M_p(x)$ the random variable on $\Bo^n$ that represents the element in $\Bo^n$ reached after applying the bit-flip mutation operator with probability $p$ to solution $x$.

\begin{lemma}
\label{lem:prob-mut}
Given two solutions $x,y \in \Bo^n$, the probability of obtaining $y$ after a bit-flip mutation over $x$ is
\begin{equation}
\label{eqn:prob-mut}
\Prob{M_p(x)=y} = p^{|x\oplus y|}(1-p)^{n-|x\oplus y|}.
\end{equation}
\end{lemma}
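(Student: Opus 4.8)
The plan is to reduce the claim to a direct application of independence among the $n$ individual bit-flip events. By definition, the operator $M_p$ acts on each coordinate of $x$ independently, flipping bit $i$ with probability $p$ and leaving it fixed with probability $1-p$. Consequently the event $\{M_p(x)=y\}$ is the intersection, over all $n$ coordinates, of the event that coordinate $i$ ends up equal to $y_i$; because these per-coordinate events are mutually independent, the overall probability factorizes into a product of single-coordinate probabilities.

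First I would partition the $n$ coordinates according to whether $x$ and $y$ agree there. A coordinate $i$ satisfies $y_i \neq x_i$ exactly when $(x \oplus y)_i = 1$, so the set of positions that must be flipped to turn $x$ into $y$ is precisely the support of $x \oplus y$, whose cardinality is $|x \oplus y|$ by the definition of the notation $|\cdot|$. The complementary set of positions, where $x$ and $y$ coincide and hence where no flip may occur, therefore has cardinality $n - |x \oplus y|$.

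Next I would assign the per-coordinate probabilities and multiply. For each of the $|x \oplus y|$ differing positions the required outcome is that the bit is flipped, contributing a factor $p$; for each of the remaining $n - |x \oplus y|$ positions the required outcome is that the bit is left unchanged, contributing a factor $1-p$. Invoking independence to take the product of these contributions yields
\begin{equation}
\Prob{M_p(x)=y} = \prod_{i \,:\, (x\oplus y)_i = 1} p \;\cdot\; \prod_{i \,:\, (x\oplus y)_i = 0} (1-p) = p^{|x\oplus y|}(1-p)^{n-|x\oplus y|},
\end{equation}
which is exactly (\ref{eqn:prob-mut}).

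Since the argument is nothing more than the elementary product rule for independent events, there is no genuine obstacle to anticipate. The only point that warrants a line of care is the bookkeeping that identifies the number of flipped positions with $|x \oplus y|$ and the number of fixed positions with $n - |x \oplus y|$; this follows immediately from reading $\oplus$ as coordinate-wise exclusive OR, so that the ones of $x \oplus y$ mark exactly the disagreements between $x$ and $y$.
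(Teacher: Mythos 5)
Your proposal is correct and follows essentially the same argument as the paper's own proof: the differing bits (counted by $|x\oplus y|$) must each be flipped and the remaining bits left unchanged, with the probability factorizing by independence of the per-bit events. Your write-up is simply a more detailed, explicit version of the paper's brief proof.
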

\begin{proof}
The solution $y$ can only be obtained if all the bits that differ from the solution $x$ are mutated and the other ones are kept unchanged. Since the number of differing bits is $|x\oplus y|$ and each bit is individually changed with probability $p$ we obtain the claimed result.
\end{proof}

We are interested in $f(M_p(x))$, the objective function value after the mutation of a solution. This value is also a random variable and we want to analyze its probability distribution. 
Given a particular search space, directly enumerating this distribution by evaluating every solution is not tractable. However, the theory of landscapes provides tools for extracting information from this probability distribution in an efficient way. This information arises from the moments of the probability distribution. In the following sections we analyze these moments.

\subsection{Expectation}
\label{subsec:expectation}

Let us start by computing the expected value of $f(M_p(x))$. The expected value is easy to compute in the case of the elementary components of a function $f$. The result of the next theorem was previously published by~\cite{Chicano2011gecco}. \cite{Sutton2011gecco} also studied the expected value after mutation and found that it must be a polynomial in $p$. We, however, present here the result and its proof because the notation is slightly different from the one used in the previous works.

\begin{theorem}
\label{thm:mut-exp-elementary}
Let $x \in \Bo^n$ be a binary string, $f:\Bo^n\to \Real$ a function, $f_{[j]}$ its order-$j$ elementary component and let us denote with $M_p(x)$ the random variable that represents the element in $\Bo^n$ reached after applying the bit-flip mutation operator with probability $p$ to solution $x$. The expected value of the random variable $f_{[j]}(M_p(x))$ is
\begin{equation}
\label{eqn:mut-exp-elementary}
\Exp{f_{[j]}(M_p(x))} = (1-2p)^j f_{[j]}(x)
\end{equation}
\end{theorem}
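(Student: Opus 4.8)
The plan is to reduce the statement to a single Walsh function and then carry out one explicit expectation. Since $f_{[j]} = \sum_{|w|=j} a_w \psi_w$, linearity of expectation gives $\Exp{f_{[j]}(M_p(x))} = \sum_{|w|=j} a_w \Exp{\psi_w(M_p(x))}$, so the whole theorem follows once I establish the single-function identity $\Exp{\psi_w(M_p(x))} = (1-2p)^{|w|}\psi_w(x)$ for every $w$, specialized to $|w|=j$. I would isolate this as the key lemma to prove, and then resum over all $w$ with $|w|=j$ to recover $f_{[j]}(x)$.

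To evaluate $\Exp{\psi_w(M_p(x))} = \sum_{y \in \Bo^n}\Prob{M_p(x)=y}\,\psi_w(y)$, I would first substitute the transition probability from Lemma~\ref{lem:prob-mut}, then change the summation variable from $y$ to the flip pattern $z = x \oplus y$, which is a bijection of $\Bo^n$ onto itself. Using the homomorphism property \eqref{eqn:sum-arg}, namely $\psi_w(x\oplus z) = \psi_w(x)\psi_w(z)$, the factor $\psi_w(x)$ pulls out of the sum, leaving $\Exp{\psi_w(M_p(x))} = \psi_w(x)\sum_{z\in\Bo^n} p^{|z|}(1-p)^{n-|z|}\psi_w(z)$. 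Everything then reduces to computing the weighted Walsh sum $\Sigma_w := \sum_{z\in\Bo^n} p^{|z|}(1-p)^{n-|z|}\psi_w(z)$.

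The cleanest way to compute $\Sigma_w$ is to observe that both factors are products over bit positions: $p^{|z|}(1-p)^{n-|z|} = \prod_{i=1}^n p^{z_i}(1-p)^{1-z_i}$ and $\psi_w(z) = \prod_{i=1}^n (-1)^{w_i z_i}$. Hence the sum factors as $\Sigma_w = \prod_{i=1}^n \bigl(\sum_{z_i\in\{0,1\}} p^{z_i}(1-p)^{1-z_i}(-1)^{w_i z_i}\bigr)$. Each factor equals $(1-p)+p = 1$ when $w_i=0$ and $(1-p)-p = 1-2p$ when $w_i=1$, so $\Sigma_w = (1-2p)^{|w|}$, which gives the key lemma. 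Substituting $|w|=j$ and resumming over $w$ completes the proof.

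I expect the only real content to be the evaluation of $\Sigma_w$; everything else is linearity of expectation and a change of variables. A second route, more in keeping with the machinery developed earlier in the paper, avoids the bit-factorization by grouping $z$ into spheres: $\Sigma_w = \sum_{r=0}^n p^r(1-p)^{n-r}\sum_{|z|=r}\psi_w(z) = \sum_{r=0}^n p^r(1-p)^{n-r}\krawel{(n)}{r}{|w|}$ by \eqref{eqn:eigenvalue-walsh}, after which the Krawtchouk generating function \eqref{eqn:kr-generating} evaluated at $x = p/(1-p)$ collapses the sum to $(1-2p)^{|w|}$. The potential pitfall in this second route is the bookkeeping of the $(1-p)^n$ prefactor when matching the generating function; the bit-factorization route sidesteps this entirely, and is the one I would present as the main argument.
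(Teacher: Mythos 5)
Your proof is correct, but it takes a genuinely different route from the paper's. The paper never descends to individual Walsh functions inside this proof: it partitions the expectation sum into the spheres $N_r(x)$, uses the fact that $f_{[j]}$ is an eigenvector of every sphere adjacency matrix $\vec{A}^{(r)}$ with eigenvalue $\krawel{(n)}{r}{j}$ (Proposition~\ref{prop:eigenvalue-walsh}), and then collapses $\sum_{r=0}^n p^r(1-p)^{n-r}\krawel{(n)}{r}{j}$ to $(1-2p)^j$ via the generating function (\ref{eqn:kr-generating}). Your ``second route'' is essentially that argument, and the pitfall you flag there is real: the substitution $x=p/(1-p)$ breaks down at $p=1$, which is why the paper must dispose of that case separately using the row-reversal identity (\ref{eqn:kraw-row}). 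Your main argument --- reduce to a single $\psi_w$ by linearity, shift by $z=x\oplus y$, and factor both the mutation kernel $p^{|z|}(1-p)^{n-|z|}$ and $\psi_w(z)$ over bit positions --- is more elementary and arguably cleaner: it needs no Krawtchouk machinery at all, it makes the probabilistic content transparent (each bit flips independently, $(-1)^{z_i}$ has expectation $1-2p$ under a Bernoulli($p$) flip, and only the $|w|$ positions with $w_i=1$ contribute a nontrivial factor), and it covers $p=1$ with no special case since nothing is ever divided by $1-p$. What the paper's route buys in exchange is coherence with the rest of its framework: the sphere decomposition and the weighted Krawtchouk sum are precisely the objects reused downstream (for instance in the Onemax analysis of Section~\ref{subsec:onemax}), so proving the theorem that way keeps every later result expressed in the same algebraic vocabulary.
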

\begin{proof}
\begin{align}
\Exp{f_{[j]}(M_p(x))} &= \sum_{y \in \Bo^n} f_{[j]}(y) \Prob{M_p(x)=y} \notag \\
&= \sum_{y \in \Bo^n} f_{[j]}(y) p^{|x\oplus y|}(1-p)^{n-|x\oplus y|} & \text{by Lemma~\ref{lem:prob-mut}} \notag \\
&= \sum_{r=0}^{n} \sum_{y \in N_r(x)} f_{[j]}(y) p^{|x\oplus y|}(1-p)^{n-|x\oplus y|} & \text{dividing the search space} \notag \\
&= \sum_{r=0}^{n} \left( p^{r}(1-p)^{n-r} \sum_{y \in N_r(x)} f_{[j]}(y) \right) & \text{by definition of $N_r$} \notag \\
&= \sum_{r=0}^{n} p^{r}(1-p)^{n-r} \left( \vec{A}^{(r)} f_{[j]} \right)(x) & \text{by Proposition~\ref{prop:eigenvalue-walsh}} \notag \\
&= \left(\sum_{r=0}^{n} p^{r}(1-p)^{n-r} \krawel{(n)}{r}{j}\right) f_{[j]}(x) .
\end{align}

Using the generating function for Krawtchouk matrices (\ref{eqn:kr-generating}) we can simplify the term within the parentheses in the following way:

\begin{align}
\sum_{r=0}^{n} p^{r}(1-p)^{n-r} \krawel{(n)}{r}{j} &= (1-p)^{n} \sum_{r=0}^n \left(\frac{p}{1-p}\right)^r  \krawel{(n)}{r}{j} \notag \\
&=(1-p)^n   \left(1+\frac{p}{1-p}\right)^{n-j} \left(1-\frac{p}{1-p}\right)^j \notag \\
&=(1-p)^n   \left(\frac{1}{1-p}\right)^{n-j} \left(\frac{1-2p}{1-p}\right)^j \notag \\
&=(1-2p)^j.
\end{align}

The previous development is valid if $p < 1$. In the case $p=1$ we cannot divide by $1-p$, but even in this case the final result holds. To prove this we just have to consider that the term $(1-p)^{n-r}$ is zero except for $r=n$ and $p^r$ is always 1. Then we can write
\begin{equation}
\sum_{r=0}^{n} p^{r}(1-p)^{n-r} \krawel{(n)}{r}{j} = \krawel{(n)}{n}{j} = (-1)^j \krawel{(n)}{0}{j} = (1-2p)^j,
\end{equation}
where we used (\ref{eqn:kraw-row}) and the fact that $\krawel{(n)}{0}{j}=1$. We finally obtain the claimed result for all the possible values of $p$.
\end{proof}

As a direct consequence of the previous theorem we can compute the expected value of $f(M_p(x))$ for an arbitrary function with the help of the decomposition of the function into elementary components.

\begin{corollary}
Let $x \in \Bo^n$ be a binary string, $f:\Bo^n\to \Real$ a function and $M_p(x)$ the solution reached after applying the bit-flip mutation operator with probability $p$ to solution $x$. The expected value of the random variable $f(M_p(x))$ is
\begin{equation}
\label{eqn:expectation-decomp}
\Exp{f(M_p(x))} = \sum_{j=0}^n (1-2p)^j f_{[j]}(x) ,
\end{equation}
where $f_{[j]}$ is the order-$j$ elementary component of $f$.
\end{corollary}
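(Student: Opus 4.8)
The plan is to reduce this immediately to Theorem~\ref{thm:mut-exp-elementary} using nothing more than the additive structure of the elementary decomposition and the linearity of expectation. By definition of the elementary components, every pseudo-Boolean function satisfies $f = \sum_{j=0}^n f_{[j]}$, and this is a genuine finite sum because Proposition~\ref{prop:eigenvalue-walsh} guarantees at most $n+1$ distinct eigenvalues. Evaluating both sides at the random point $M_p(x)$ gives the pointwise identity $f(M_p(x)) = \sum_{j=0}^n f_{[j]}(M_p(x))$ between random variables on $\Bo^n$.

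First I would take the expectation of this identity. Since the sum has only $n+1$ terms, expectation passes through it by linearity with no convergence concerns, yielding
\begin{equation}
\Exp{f(M_p(x))} = \Exp{\sum_{j=0}^n f_{[j]}(M_p(x))} = \sum_{j=0}^n \Exp{f_{[j]}(M_p(x))}. \notag
\end{equation}

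Then I would substitute the result of Theorem~\ref{thm:mut-exp-elementary} into each summand. That theorem states $\Exp{f_{[j]}(M_p(x))} = (1-2p)^j f_{[j]}(x)$ for every $0 \leq j \leq n$, so inserting it termwise produces
\begin{equation}
\Exp{f(M_p(x))} = \sum_{j=0}^n (1-2p)^j f_{[j]}(x), \notag
\end{equation}
which is exactly the claimed formula in (\ref{eqn:expectation-decomp}).

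I do not expect any real obstacle: the corollary is a direct specialization of the elementary case, and the only point requiring a moment's attention is confirming that the interchange of $\Exp{\cdot}$ with the summation is legitimate, which it trivially is because the decomposition is a finite sum. The substantive work was already carried out in the proof of Theorem~\ref{thm:mut-exp-elementary}, where the Krawtchouk generating function (\ref{eqn:kr-generating}) collapsed the weighted eigenvalue sum to $(1-2p)^j$.
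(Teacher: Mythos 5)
Your proof is correct and follows exactly the paper's own argument: decompose $f = \sum_{j=0}^n f_{[j]}$, use linearity of expectation over the finite sum, and apply Theorem~\ref{thm:mut-exp-elementary} termwise. The extra remark about the legitimacy of exchanging expectation with the finite sum is harmless but not needed beyond what the paper already does.
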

\begin{proof}
We can write $f$ as the sum of its elementary components as $f=\sum_{j=0}^n f_{[j]}$. Then, we can compute the expected value as:
\begin{align}
\Exp{f(M_p(x))} &= \sum_{j=0}^n \Exp{f_{[j]}(M_p(x))} = \sum_{j=0}^n (1-2p)^j f_{[j]}(x),
\end{align}
where we used the result of Theorem~\ref{thm:mut-exp-elementary}.
\end{proof}

\subsection{Higher Order Moments}
\label{subsec:higher-moments}

Equation (\ref{eqn:expectation-decomp}) can be used to compute the expected value of $f(M_p(x))$. We may also use it to extend to higher order moments, as in the following theorem. 
\begin{theorem}
Let $x \in \Bo^n$ be a binary string, $f:\Bo^n\to \Real$ a function and $M_p(x)$ the solution reached after applying the bit-flip mutation operator with probability $p$ to solution $x$. The $m$-th moment of the random variable $f(M_p(x))$ is
\begin{equation}
\label{eqn:cth-order}
\mu_m\{f(M_p(x))\} = \sum_{j=0}^n (1-2p)^j f^m_{[j]}(x) ,
\end{equation}
where $f^m_{[j]}$ is the order-$j$ elementary component of $f^m$.\footnote{We use this notation instead of $(f^m)_{[j]}$ to simplify the expressions, but $f^m_{[j]}$ should not be confused with $f_{[j]}$ to the power of $m$.}
\end{theorem}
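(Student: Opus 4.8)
The key observation is that the $m$-th moment of the random variable $f(M_p(x))$ is, by definition, the expected value of $f(M_p(x))^m$. Since the mutation operator produces a single element $M_p(x) \in \Bo^n$, we have $f(M_p(x))^m = f^m(M_p(x))$, where $f^m$ is understood as the pseudo-Boolean function $x \mapsto (f(x))^m$. The plan is therefore to reduce the moment computation to an \emph{expectation} computation applied to a different function, namely $f^m$, and then invoke machinery already established.

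Concretely, I would first write
\begin{equation}
\mu_m\{f(M_p(x))\} = \Exp{f(M_p(x))^m} = \Exp{f^m(M_p(x))}, \nonumber
\end{equation}
justifying the last equality by noting that $M_p(x)$ is a single random point, so evaluating $f$ there and raising to the $m$-th power coincides with evaluating the pseudo-Boolean function $f^m$ at that same random point. The function $f^m$ is itself a perfectly good pseudo-Boolean function on $\Bo^n$, so it admits a Walsh expansion and hence a decomposition into elementary components $f^m = \sum_{j=0}^n f^m_{[j]}$ exactly as in the Definition of elementary components.

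Once the moment is recast as $\Exp{f^m(M_p(x))}$, the result follows immediately by applying the Corollary (equation~(\ref{eqn:expectation-decomp})) to the function $f^m$ in place of $f$. That corollary states that for any pseudo-Boolean function $g$ we have $\Exp{g(M_p(x))} = \sum_{j=0}^n (1-2p)^j g_{[j]}(x)$; taking $g = f^m$ yields precisely (\ref{eqn:cth-order}). No new landscape-theoretic computation is needed, since all the work was done in proving Theorem~\ref{thm:mut-exp-elementary} and its corollary.

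The only genuinely substantive point — and thus the step I would be most careful about — is the identity $f(M_p(x))^m = f^m(M_p(x))$. This is not a landscape-theoretic fact but a simple consequence of $M_p(x)$ being a random point of the search space: for each fixed outcome $y \in \Bo^n$ of the mutation, $f(y)^m = (f^m)(y)$ by the definition of $f^m$ as a pointwise power, so the two random variables agree outcome-by-outcome and therefore have the same expectation. Everything else is a direct specialization of an already-proven corollary, so I expect the main "obstacle" to be merely a notational one: being explicit that $f^m_{[j]}$ denotes the order-$j$ elementary component of the function $f^m$ (as the footnote warns) and not $f_{[j]}$ raised to a power.
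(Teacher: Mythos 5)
Your proposal is correct and follows exactly the paper's own argument: the paper likewise identifies $\mu_m\{f(M_p(x))\}$ with $\Exp{f^m(M_p(x))}$ and then applies equation~(\ref{eqn:expectation-decomp}) to the function $f^m$. The extra care you take in justifying the pointwise identity $f(M_p(x))^m = f^m(M_p(x))$ is a detail the paper leaves implicit, but it is the same proof.
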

\begin{proof}
By definition, $\mu_m\{f(M_p(x))\}$ can be expressed as the expectation of the random variable $f^m(M_p(x))$. Then, using (\ref{eqn:expectation-decomp}) we can write:
\begin{align}
\mu_m\{f(M_p(x))\} = \Exp{f^m(M_p(x))} = \sum_{j=0}^n (1-2p)^j f^m_{[j]}(x) . \notag
\end{align}
\end{proof}

We define the $0$-th moment $\mu_0\{f(M_p(x))\}=1$. We can observe from (\ref{eqn:cth-order}) that all the higher-order moments are polynomials in $p$, just like the expectation (first order moment). 

Let us now introduce some new notation. Let us denote with $\vec{\mu}\{f(M_p(x))\}$ the vector of moments, that is, the $m$-th component of this vector is the $m$-th moment. We do not limit the number of components of this vector, we can consider it as an infinite-dimensional vector. Later we will see that only a finite number of elements of this vector would be required for our purposes. We define the matrix function $\vec{F}(x)$ as $\vec{F}_{m,j}(x) = f^m_{[j]}(x)$ where $0 \leq j \leq n$ and $m\geq0$. Let us also define the vector $\vec{\Lambda}(p)$ as $\vec{\Lambda}_j(p)=(1-2p)^j$ for $0 \leq j \leq n$. 

Using the new notation we can write (\ref{eqn:cth-order}) in vector form as:
\begin{equation}
\nonumber
\left( \begin{array}{c}\mu_0\{f(M_p(x))\} \\ \mu_1\{f(M_p(x))\} \\ \vdots \\  \mu_m\{f(M_p(x))\} \\ \vdots \end{array}\right) = \left( \begin{array}{cccc} 
1 & 0 & \ldots & 0 \\
f_{[0]}(x) & f_{[1]}(x) & \ldots & f_{[n]}(x) \\
\vdots & \vdots & \ddots & \vdots \\
f^m_{[0]}(x) & f^m_{[1]}(x) & \ldots & f^m_{[n]}(x) \\
\vdots & \vdots & \ddots & \vdots \\
\end{array}\right) \left( \begin{array}{c}1 \\ 1-2p \\ \vdots \\  (1-2p)^n  \end{array} \right),
\end{equation}
or in a compact way:
\begin{equation}
\label{eqn:muc-vector}
\vec{\mu}\{f(M_p(x))\} = \vec{F}(x) \vec{\Lambda}(p),
\end{equation}
where $\vec{F}(x)$ and $\vec{\Lambda}(p)$ are multiplied using the matrix product. This new form of writing (\ref{eqn:cth-order}) has the property of expressing the vector of moments of $f(M_p(x))$ as the product of a matrix that depends on the objective function (and solution $x$) and a vector that depends on the mutation operator and its parameter $p$. In some sense, we can claim that (\ref{eqn:muc-vector}) decomposes the moments in a problem-dependent part, $\vec{F}(x)$, and an operator-dependent part, $\vec{\Lambda}(p)$. This is the kind of equation we are looking for, since it can be applied to different problems provided that the problem-dependent part for each one is computed and we do not need to re-compute the operator-dependent part. In the same way, it can also be applied to any parameter of the operator (value of $p$) without recomputing the problem-dependent part. 

We should notice here that the first column of matrix $\vec{F}(x)$ provides the statistical moments of the fitness distribution in the whole search space considering a uniform random distribution. Thus, $\vec{F}_{1,0}(x)=f_{[0]}(x)$ is the average value of the evaluation function in the search space, $\vec{F}_{2,0}(x)=f^2_{[0]}(x)$ is the second order moment, and so on. We can prove this by setting $p=1/2$ in (\ref{eqn:muc-vector}) because a probability of $p=1/2$ for bit-flip mutation is equivalent to a uniform random selection of a solution in the search space. All the elements but the first in $\vec{\Lambda}(p)$ vanish and we get the claimed result.

\subsection{Computing the Matrix Function $\vec{F}$}
\label{subsec:computing-f}

The computation of the matrix function $\vec{F}$ is not efficient in general. \cite{Sutton2011tcs} provide an algorithm to compute the Walsh decomposition of $f^m$. Using this Walsh decomposition it is possible to obtain the elementary components of $f^m$, as required for the computation of $\vec{F}(x)$. If the Walsh decomposition of $f$ is:
\[
f = \sum_{w \in \Bo^n} a_w \psi_w ,
\]
then the Walsh decomposition of the $m$-th power $f^m$ is:
\[
f^m = \left(\sum_{w \in \Bo^n} a_w \psi_w \right)^m = \sum_{\sum_{w \in \Bo^n}i_w=m} \comb{m}{i_{00\ldots0},i_{00\ldots1},\ldots,i_{11\ldots1}} \prod_{w \in \Bo^n} a_w^{i_w} \psi_{w}^{i_w} .
\]

This procedure has the advantage that it is general and can be used with any function defined over bitstrings. The drawback, however, is its inefficiency when $m$ is high. Thus, for each particular problem, we should analyze the objective function in order to find an efficient way of evaluating the matrix function $\vec{F}$ in an arbitrary solution $x$. In Section~\ref{sec:study} we analyze two problems and provide an efficient computation of this matrix function for these problems.

In some cases, the efficient (polynomial time) evaluation of $\vec{F}(x)$ can only be possible if $\NP=\P$. This happens for example in the SAT problem as the following theorem states.

\begin{proposition}
Let us consider the SAT problem and an evaluation function $f(x)$ that takes value 1 if $x \in \Bo^n$ satisfies the propositional formula and 0 otherwise. If there exists a polynomial time algorithm for computing $f_{[0]}$ then $\NP=\P$.
\end{proposition}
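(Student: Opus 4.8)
The plan is to show that the single scalar $f_{[0]}$ is nothing more than the normalized count of satisfying assignments, so that an efficient evaluation of it would hand us a polynomial-time decision procedure for satisfiability. First I would unfold the definition of the order-$0$ elementary component. Since the only $w \in \Bo^n$ with $|w| = 0$ is $w = 0$, the defining sum collapses to a single term, and using $\psi_0 = 1$ from (\ref{eqn:walsh-zero}) we get $f_{[0]} = a_0 \psi_0 = a_0$; thus $f_{[0]}$ is a \emph{constant} function whose value is the zeroth Walsh coefficient. Combining this with the coefficient formula (\ref{eqn:walsh-coeff}) gives
\[
f_{[0]} = a_0 = \frac{1}{2^n}\dotproduct{\psi_0}{f} = \frac{1}{2^n}\sum_{x \in \Bo^n} f(x).
\]

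Next I would interpret this sum in terms of the SAT instance. Because $f$ is the indicator of the set of satisfying assignments, $\sum_{x \in \Bo^n} f(x)$ equals the number of assignments that satisfy the propositional formula; hence $2^n f_{[0]}$ is exactly that count, and in particular $f_{[0]} > 0$ if and only if the formula is satisfiable. Consequently, a polynomial-time algorithm returning $f_{[0]}$ yields a polynomial-time algorithm for SAT: evaluate $f_{[0]}$ and answer ``satisfiable'' precisely when the returned value is nonzero. Here one should note that $f_{[0]}$ is a rational with denominator $2^n$, so it is either $0$ or at least $2^{-n}$; the test against zero is therefore exact and requires no approximation argument.

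Finally I would invoke the $\NP$-completeness of SAT. Since SAT is $\NP$-complete, any polynomial-time decision procedure for it forces $\NP \subseteq \P$, and together with the trivial inclusion $\P \subseteq \NP$ this gives $\NP = \P$. The argument is short, and I expect no real obstacle: the only step demanding care is the identification of $f_{[0]}$ with $2^{-n}\sum_{x} f(x)$, which rests entirely on the fact that the order-$0$ component has a single nonzero Walsh term and on the definition of the Walsh coefficient. Everything afterwards is the standard reduction, made slightly easier by the observation that we only need the zero/nonzero test rather than the full (counting) value of $f_{[0]}$.
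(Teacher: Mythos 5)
Your proposal is correct and follows essentially the same route as the paper: identify $f_{[0]}$ with the average of $f$ over $\Bo^n$, observe that since $f$ is a $0$/$1$ indicator this average is positive exactly when the formula is satisfiable, and then invoke the $\NP$-completeness of SAT. The only difference is one of detail — you derive the identity $f_{[0]} = 2^{-n}\sum_{x} f(x)$ explicitly from the Walsh expansion and note the exactness of the zero test, whereas the paper simply cites that $f_{[0]}$ is the search-space average — but the argument is the same.
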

\begin{proof}
The value $f_{[0]}$ is the average value of the objective function in the whole search space. Since $f$ can only take values 0 and 1, if $f_{[0]}>0$ then the formula is satisfiable. Thus, if we find a polynomial time algorithm to evaluate $f_{[0]}$ we can solve the decision problem in polynomial time. But, as SAT is $\NP$-complete then $\NP=\P$.
\end{proof}

As a consequence of the previous proposition we cannot ensure that an efficient evaluation of the matrix function $\vec{F}(x)$ exists in general. The complexity of computing $\vec{F}(x)$  depends on the problem.



\subsection{Fitness Probability Distribution}
\label{subsec:fitness-prob-distr}

With the help of the moments vector $\vec{\mu}\{f(M_p(x))\}$ we can compute the probability distribution of the values of $f$ in a mutated solution. In order to do this we proceed in the same way as \cite{Sutton2011foga}. 

Let us call $\xi_0 < \xi_1 < \cdots < \xi_{q-1}$ to the $q$ possible values that the function $f$ can take in the search space. Since we are dealing with a finite search space, $q$ is a finite number (perhaps very large). We are interested in computing $\Prob{f(M_p(x))=\xi_i}$ for $0 \leq i < q$. In order to simplify the notation in the following we define the vector of probabilities $\vec{\pi}(f(M_p(x)))$ as $\vec{\pi}_i(f(M_p(x)))=\Prob{f(M_p(x))=\xi_i}$.

\begin{theorem}
Let us consider the binary hypercube and let us denote with $\xi_i$ the possible values that the objective function $f$ can take in the search space, where $\xi_i < \xi_{i+1}$ for $0\leq i < q-1$. Then, the vector of probabilities $\vec{\pi}(f(M_p(x)))$ can be computed as:
\begin{equation}
\label{eqn:fitness-probdist}
\vec{\pi}(f(M_p(x))) = \underbrace{\left(\vec{V}^{T}\right)^{-1} \vec{F}(x)}_{\mbox{problem-dependent}} \vec{\Lambda}(p) ,
\end{equation}
where the matrix function $\vec{F}(x)$ is limited to the first $q$ rows and $\vec{V}$ denotes the Vandermonde matrix for the $\xi_i$ values, that is, $\vec{V}_{i,j}=\xi_i^j$ for $0 \leq i,j < q$.
\end{theorem}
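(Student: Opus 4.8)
The plan is to connect the moments of $f(M_p(x))$ to its probability vector through the classical moment--Vandermonde relationship, and then substitute the decomposition (\ref{eqn:muc-vector}) already established for the moment vector. The starting observation is that, since $f(M_p(x))$ is a discrete random variable supported on the $q$ values $\xi_0,\ldots,\xi_{q-1}$, its $m$-th moment expands directly in terms of the probability vector:
\[
\mu_m\{f(M_p(x))\} = \sum_{i=0}^{q-1} \xi_i^m \, \Prob{f(M_p(x))=\xi_i} = \sum_{i=0}^{q-1} \xi_i^m \, \vec{\pi}_i(f(M_p(x))) .
\]

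First I would read off the matrix structure of this family of identities. Stacking the relation for $m=0,1,\ldots,q-1$, the coefficient matrix has $(m,i)$-entry $\xi_i^m$, which is exactly $\vec{V}^T$ because $\vec{V}_{i,j}=\xi_i^j$ gives $(\vec{V}^T)_{m,i}=\xi_i^m$. Hence, writing $\vec{\mu}_{q}\{f(M_p(x))\}$ for the truncation of the moment vector to its first $q$ components, we obtain the compact identity $\vec{\mu}_{q}\{f(M_p(x))\} = \vec{V}^{T}\,\vec{\pi}(f(M_p(x)))$.

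Second, I would invert this relation. Because the values $\xi_0 < \xi_1 < \cdots < \xi_{q-1}$ are pairwise distinct, the Vandermonde matrix $\vec{V}$ is nonsingular, so $\vec{V}^T$ is invertible and $\vec{\pi}(f(M_p(x))) = (\vec{V}^{T})^{-1}\,\vec{\mu}_{q}\{f(M_p(x))\}$. This is the step that legitimizes the truncation to the first $q$ moments: a probability distribution supported on $q$ distinct points is determined exactly by its moments of orders $0$ through $q-1$, and the invertibility of the Vandermonde matrix is the quantitative expression of that fact.

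Finally, I would substitute the moment decomposition. Restricting (\ref{eqn:muc-vector}) to its first $q$ rows gives $\vec{\mu}_{q}\{f(M_p(x))\} = \vec{F}(x)\,\vec{\Lambda}(p)$ with $\vec{F}(x)$ now limited to rows $0$ through $q-1$; combining this with the previous display yields the claimed formula. I do not expect a genuinely hard step here. The only points that require care are the bookkeeping of the transpose and the truncation of $\vec{F}(x)$ to $q$ rows; the single conceptual ingredient is recognizing that exactly $q$ moments suffice, which is guaranteed by the distinctness of the $\xi_i$ and hence the nonsingularity of $\vec{V}$.
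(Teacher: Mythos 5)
Your proposal is correct and follows essentially the same route as the paper's own proof: expressing the moments via the Vandermonde relation $\vec{\mu}\{f(M_p(x))\} = \vec{V}^{T}\vec{\pi}(f(M_p(x)))$, invoking nonsingularity of $\vec{V}$ from the distinctness of the $\xi_i$, and substituting the decomposition (\ref{eqn:muc-vector}) to solve for $\vec{\pi}(f(M_p(x)))$. Your explicit remark that $q$ moments suffice to determine a distribution on $q$ distinct points is a nice clarification of the truncation that the paper handles only implicitly, but it is the same argument.
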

\begin{proof}
We can compute the $m$-th moment $\vec{\mu}_m\{f(M_p(x))\}$ using the following expression: 
\begin{equation}
\vec{\mu}_m\{f(M_p(x))\} = \sum_{i=0}^{q-1} \xi_i^m \Prob{f(M_p(x))=\xi_i}.
\end{equation}
We can write this in vector form as:
\begin{align}
\vec{\mu}\{f(M_p(x))\} &= \left( \begin{array}{cccc} 1 & 1 & \cdots & 1 \\
\xi_0^1 & \xi_1^1 & \cdots & \xi_{q-1}^1 \\
\vdots & \vdots & \ddots & \vdots \\
\xi_0^{q-1} & \xi_1^{q-1} & \cdots & \xi_{q-1}^{q-1} \\ \end{array} \right) \left( \begin{array}{c} \Prob{f(M_p(x))=\xi_0} \\ \Prob{f(M_p(x))=\xi_1} \\ \vdots \\ \Prob{f(M_p(x))=\xi_{q-1}} \\ \end{array} \right) \notag \\
&= \vec{V}^{T} \vec{\pi}(f(M_p(x))) .
\end{align}
Using (\ref{eqn:muc-vector}) we can write:
\begin{align}
\vec{\mu}\{f(M_p(x))\} = \vec{V}^{T} \vec{\pi}(f(M_p(x))) = \vec{F}(x) \vec{\Lambda}(p) ,
\end{align}
and solving $\vec{\pi}(f(M_p(x)))$ we finally get (\ref{eqn:fitness-probdist}). The determinant of the Vandermonde matrix is $\prod_{0 \leq i < j < q} \left(\xi_{i}-\xi_{j}\right)$~\citep[pp. 17-18]{Mirsky1955} and the matrix is nonsingular if and only if all the $\xi_i$ values are different. This is our case, so the Vandermonde matrices we use are invertible.
\end{proof}

Again we can observe that $\vec{\pi}(f(M_p(x)))$ is the product of a term that is problem-dependent and a vector that depends on the parameter of the mutation $p$. From (\ref{eqn:fitness-probdist}) it is clear that each particular probability $\vec{\pi}_i(f(M_p(x)))$ is a polynomial in $p$.

We can also compute the cumulative density function $\vec{\Pi}(f(M_p(x)))$ defined by:
\begin{equation}
\vec{\Pi}_i(f(M_p(x))) = \Prob{f(M_p(x)) \leq \xi_i} = \sum_{j=0}^{i} \vec{\pi}_j(f(M_p(x))) .
\end{equation}

We can write the previous equation in vector form as:
\begin{equation}
\label{eqn:cdf-pi}
\vec{\Pi}(f(M_p(x))) = \vec{L} \vec{\pi}(f(M_p(x))) .
\end{equation}
where $\vec{L}$ is the lower triangular matrix defined by 
\[
\vec{L}_{i,j} = \left\{\begin{array}{ll}1 & \mbox{if $i \geq j$,} \\ 0 & \mbox{otherwise.}\end{array}\right.
\]

We can notice again that each element of $\vec{\Pi}(f(M_p(x)))$ is a polynomial in $p$. The component $\vec{\Pi}_i(f(M_p(x)))$ is the probability of reaching a solution $y$ with function value $f(y) \leq \xi_i$ after the mutation with parameter $p$. If $x$ has function value $f(x)=\xi_i$, then $1-\vec{\Pi}_i(f(M_p(x)))$ is the probability of improving the function value of solution $x$ in one application of bit-flip mutation. For problems in which the matrix $\vec{F}$ can be efficiently computed the expression $1-\vec{\Pi}_i(f(M_p(x)))$ could be used as the base for a new mutation operator that tries to maximize the probability of an improving move. 

\section{Case Studies}
\label{sec:study}

In this section we present the elementary landscape decomposition of two well-known problems and their powers (the $\vec{F}(x)$ matrix). With this decomposition we can compute the probability distribution of any solution after mutation. We start by analyzing a toy problem: Onemax. In Section~\ref{subsec:max-sat} we analyze MAX-SAT.

\subsection{Onemax}
\label{subsec:onemax}

Onemax is a linear pseudo-Boolean fitness function that is often used in the analysis of evolutionary algorithms. In our case, we consider the sum of all order-1 Walsh functions, which is related to Onemax by a simple linear transformation. That is:
\begin{equation}
\label{eqn:f-onemax}
f(x) = \sum_{i=1}^{n} \psi_{\underline{i}}(x) =  n - 2 \sum_{i=1}^{n} x_i = n - 2 |x| .
\end{equation}

The objective function in Onemax is $|x|$ (the number of ones in $x \in \Bo^n$).
Maximizing the number of ones in $x$ (original Onemax problem) is equivalent to minimizing $f(x)$.
We should notice here that $f(x)$ can take values in the range $[-n,n]$ by steps of~$2$. That is, the range of $f$ is the set $\{n-2j|j\in\mathbb{N}, 0\leq j \leq n\}$. Although we study here the function $f(x)$ defined in~(\ref{eqn:f-onemax}) for the sake of simplicity, we will see at the end of this section that the probability distribution after mutation of the regular Onemax function is the same as $f(x)$.

The following lemma provides intermediate results that will be useful in the search for an expression for  $\vec{F}(x)$.

\begin{lemma}
\label{lem:sum-walsh}
The sum of all the Walsh functions with the same order is related to the Krawtchouk matrices by means of the following identity:
\begin{equation}
\sum_{w \in \Bo^n \atop |w| = p} \psi_{w}(x) = \krawel{(n)}{p}{|x|}.
\end{equation}
\end{lemma}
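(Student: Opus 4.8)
The plan is to recognize this identity as an immediate specialization of machinery already established, so that essentially no new computation is required. Note first that here $p$ denotes a fixed integer order with $0 \le p \le n$, not the mutation probability of Section~\ref{sec:mutation}; I would flag this to avoid a notational collision.

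The cleanest route is to invoke Proposition~\ref{prop:krawtchouk}, equation (\ref{eqn:sum-walsh-r}), with the mask $t$ taken to be the all-ones string $11\ldots1 \in \Bo^n$. For this choice of $t$ one has $\Bo^n \wedge t = \Bo^n$ (ANDing with all ones is the identity), $|t| = n$, and $|x \wedge t| = |x|$. Substituting these into (\ref{eqn:sum-walsh-r}) with $r = p$ collapses the masked sum into the unrestricted sum over all $w$ of order $p$, and the right-hand side becomes $\krawel{(n)}{p}{|x|}$, which is exactly the claim.

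Alternatively, I would derive the statement directly from Proposition~\ref{prop:eigenvalue-walsh}. Equation (\ref{eqn:eigenvalue-walsh}) reads $\sum_{s \in S_p} \psi_w(s) = \krawel{(n)}{p}{|w|}$. Applying the Walsh symmetry (\ref{eqn:walsh-swap}), namely $\psi_w(s) = \psi_s(w)$, rewrites the left-hand side as $\sum_{s \in \Bo^n,\, |s|=p} \psi_s(w)$; renaming the summation variable $s \to w$ and the free variable $w \to x$ then yields the lemma verbatim.

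Since both derivations are pure substitutions into results proved earlier in the excerpt, there is no genuine obstacle. The only point demanding care is the bookkeeping in the first route, namely verifying that the mask $t = 11\ldots1$ really does trivialize the constraint $w \in \Bo^n \wedge t$ and sends $|x \wedge t|$ to $|x|$; after that the identity follows at once.
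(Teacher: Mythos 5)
Your first route is exactly the paper's proof: the paper establishes the lemma in a single line by invoking Eq.~(\ref{eqn:sum-walsh-r}) with $t=11\dots1$, precisely the substitution you carry out. Your alternative derivation via Proposition~\ref{prop:eigenvalue-walsh} and the symmetry (\ref{eqn:walsh-swap}) is also valid, but the main argument coincides with the paper's.
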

\begin{proof}
The claim follows immediately from Eq. (\ref{eqn:sum-walsh-r}) when $t=11\dots 1$.
\end{proof}

\begin{theorem}
\label{thm:fmatrix-onemax}
The matrix function $\vec{F}(x)$ for the objective function $f(x)$ defined in (\ref{eqn:f-onemax}) depends only on $|x|$ and its elements satisfy the following identity:
\begin{equation}
\label{eqn:matrix-f-onemax}
\vec{F}_{m,j}(x) = \vec{\Xi}^{(n)}_{m,j} \krawel{(n)}{j}{|x|}  ,
\end{equation}
where $\kraw{(n)}$ is the $n$-th Krawtchouk matrix and $\vec{\Xi}^{(n)}$ is the matrix defined as:
\begin{equation}
\label{eqn:xi}
\vec{\Xi}^{(n)}_{m,j} = \frac{1}{2^n} \sum_{k=0}^n (n-2k)^m \krawel{(n)}{k}{j} .
\end{equation}
\end{theorem}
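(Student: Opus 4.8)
The plan is to unfold the definition $\vec{F}_{m,j}(x) = f^m_{[j]}(x)$ and compute the Walsh coefficients of the power $f^m$ directly, exploiting the fact that $f$ --- and hence every power $f^m$ --- depends on $x$ only through $|x|$. Writing the order-$j$ elementary component as $f^m_{[j]}(x) = \sum_{|w|=j} a^{(m)}_w \psi_w(x)$, where the $a^{(m)}_w$ are the Walsh coefficients of $f^m$, the first step is to show that $a^{(m)}_w$ depends only on $|w|$ and, in fact, equals $\vec{\Xi}^{(n)}_{m,|w|}$.

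To compute $a^{(m)}_w = \frac{1}{2^n}\dotproduct{\psi_w}{f^m}$, I would use $f^m(x) = (n-2|x|)^m$ (immediate from (\ref{eqn:f-onemax})) to rewrite the defining sum (\ref{eqn:walsh-coeff}) as $a^{(m)}_w = \frac{1}{2^n}\sum_{x\in\Bo^n}\psi_w(x)(n-2|x|)^m$. Grouping the binary strings by their number of ones $|x|=k$, and applying the symmetry $\psi_w(x)=\psi_x(w)$ from (\ref{eqn:walsh-swap}) together with Lemma~\ref{lem:sum-walsh} (with the roles of summation index and argument interchanged), the inner sum collapses to $\sum_{|x|=k}\psi_w(x) = \krawel{(n)}{k}{|w|}$. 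This yields exactly $a^{(m)}_w = \frac{1}{2^n}\sum_{k=0}^n (n-2k)^m \krawel{(n)}{k}{|w|} = \vec{\Xi}^{(n)}_{m,|w|}$, confirming that the coefficient depends only on $|w|$.

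The final step substitutes this back. Since $a^{(m)}_w = \vec{\Xi}^{(n)}_{m,j}$ is constant over all $w$ with $|w|=j$, it factors out of the sum defining the elementary component, leaving $f^m_{[j]}(x) = \vec{\Xi}^{(n)}_{m,j}\sum_{|w|=j}\psi_w(x)$. A direct application of Lemma~\ref{lem:sum-walsh} to the remaining sum gives $\sum_{|w|=j}\psi_w(x)=\krawel{(n)}{j}{|x|}$, which produces (\ref{eqn:matrix-f-onemax}) and simultaneously shows that the whole quantity depends on $x$ only through $|x|$.

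I expect the only real subtlety to be the double invocation of Lemma~\ref{lem:sum-walsh}: once in the \emph{transposed} direction (summing $\psi_w(x)$ over strings $x$ of fixed weight, which requires the swap identity (\ref{eqn:walsh-swap}) before the lemma is directly applicable) to identify the Walsh spectrum of $f^m$, and once in its stated direction to assemble the elementary component. Keeping clear which index plays the role of the summed variable and which plays the role of the argument in each Krawtchouk factor is where care is needed; everything else is routine rearrangement.
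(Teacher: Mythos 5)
Your proposal is correct and follows essentially the same route as the paper's proof: compute the Walsh coefficients of $f^m$ by partitioning $\Bo^n$ into levels $|x|=k$, identify $a^{(m)}_w = \vec{\Xi}^{(n)}_{m,|w|}$ via Lemma~\ref{lem:sum-walsh}, then factor the constant coefficient out of the order-$j$ component and apply Lemma~\ref{lem:sum-walsh} once more to obtain the factor $\krawel{(n)}{j}{|x|}$. Your explicit remark that the first invocation of the lemma requires the swap identity (\ref{eqn:walsh-swap}) is a point the paper passes over silently, but it is the same argument.
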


\begin{proof}
Let us write the Walsh decomposition of $f^m$. Given a binary string $w \in \Bo^n$, the Walsh coefficient $a^{(m)}_{w}$ of $f^m$ is
\begin{align}
a^{(m)}_{w} &= \frac{1}{2^n}\sum_{x \in \Bo^n} \psi_{w} (x) f^m(x) 
= \frac{1}{2^n}\sum_{x \in \Bo^n} \psi_{w} (x) \left(n-2|x|\right)^m  ~~~\text{dividing the search space} \notag \\
&= \frac{1}{2^n} \sum_{k=0}^{n} \left(n-2k\right)^m \sum_{x \in \Bo^n \atop |x|=k} \psi_{w} (x) = \frac{1}{2^n} \sum_{k=0}^{n} \left(n-2k\right)^m \krawel{(n)}{k}{|w|} = \vec{\Xi}^{(n)}_{m,|w|} ,
\end{align}
where we used the result of Lemma~\ref{lem:sum-walsh} and introduced the matrix $\vec{\Xi}^{(n)}$ to simplify the notation. Now we can sum together all the Walsh functions of the same order $j$ to find the elementary component $f^m_{[j]}$:
\begin{align}
\label{eqn:fmatrix-onemax}
\vec{F}_{m,j} (x) = f^m_{[j]}(x) = \sum_{w \in \Bo^n \atop |w| = j}  a^{(m)}_{w} \psi_{w}(x) = 
\vec{\Xi}^{(n)}_{m,j} \sum_{w \in \Bo^n \atop |w| = j}  \psi_{w}(x) = \vec{\Xi}^{(n)}_{m,j} \krawel{(n)}{j}{|x|}  ,
\end{align}
where we used Lemma~\ref{lem:sum-walsh} in the last step.
\end{proof}

In the following proposition we provide a property of the $\vec{\Xi}^{(n)}$ matrix that is useful to simplify the computation of the matrix.

\begin{proposition}
All the elements $\vec{\Xi}^{(n)}_{m,j}$ in which $m + j$ is odd are zero. 
\end{proposition}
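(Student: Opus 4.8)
The plan is to show that the defining sum (\ref{eqn:xi}) is antisymmetric under the reindexing $k \mapsto n-k$ whenever $m+j$ is odd, which forces it to vanish. Since the summation range $0 \le k \le n$ is invariant under this reflection, I would substitute $k' = n-k$ and track how each factor of the summand transforms, then compare the reindexed sum with the original.

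First I would treat the two factors separately. The power factor becomes
\[
\bigl(n-2(n-k)\bigr)^m = (2k-n)^m = (-1)^m (n-2k)^m,
\]
contributing a sign $(-1)^m$. For the Krawtchouk factor I would invoke the row-reflection identity (\ref{eqn:kraw-row}), namely $\krawel{(n)}{n-k}{j} = (-1)^j \krawel{(n)}{k}{j}$, which is already established, contributing a sign $(-1)^j$. Combining them, the full summand picks up a factor $(-1)^{m+j}$ under the reflection:
\[
(n-2k)^m \krawel{(n)}{k}{j} \;\longmapsto\; (-1)^{m+j}\,(n-2k)^m \krawel{(n)}{k}{j}.
\]

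Next I would reindex. Because replacing $k$ by $n-k$ merely permutes the summation indices, the value of the sum is unchanged; yet by the previous step every summand has been multiplied by $(-1)^{m+j}$. This yields
\[
\vec{\Xi}^{(n)}_{m,j} = (-1)^{m+j}\,\vec{\Xi}^{(n)}_{m,j}.
\]
When $m+j$ is odd the factor is $-1$, so $\vec{\Xi}^{(n)}_{m,j} = -\vec{\Xi}^{(n)}_{m,j}$, and hence $\vec{\Xi}^{(n)}_{m,j} = 0$, as claimed.

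I do not expect a genuine obstacle here, since the argument reduces to a single parity computation once the correct reflection is identified. The only points requiring care are applying the row-reflection identity to the correct index (the summation/row index $k$, not the fixed column index $j$) and confirming that the reflection $k \mapsto n-k$ permutes the bounds $0 \le k \le n$ onto themselves so that no boundary terms are introduced.
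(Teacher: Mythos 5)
Your proof is correct and follows essentially the same route as the paper's: both exploit the reflection $k \mapsto n-k$, pick up the sign $(-1)^m$ from the power factor and $(-1)^j$ from the row-reflection identity (\ref{eqn:kraw-row}), and conclude by parity. The only cosmetic difference is that the paper writes $2\vec{\Xi}^{(n)}_{m,j}$ as the original sum plus its reflected copy and factors out $\left(1+(-1)^{m+j}\right)$, whereas you equate the sum directly with $(-1)^{m+j}$ times itself; these are the same argument.
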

\begin{proof}
We can develop (\ref{eqn:xi}) to write:
\begin{align}
2\vec{\Xi}^{(n)}_{m,j} &= \frac{1}{2^n} \sum_{k=0}^n \left((n-2k)^m \krawel{(n)}{k}{j} + (n-2k)^m \krawel{(n)}{k}{j}\right) & \text{changing $k$ by $n-k$} \nonumber \\
&= \frac{1}{2^n} \sum_{k=0}^n \left((n-2k)^m \krawel{(n)}{k}{j} + (2k-n)^m \krawel{(n)}{n-k}{j}\right) \nonumber \\
&= \frac{1}{2^n} \sum_{k=0}^n (n-2k)^m \left( \krawel{(n)}{k}{j} + (-1)^m \krawel{(n)}{n-k}{j}\right) & \text{by Proposition~\ref{prop:kraw-col}, Eq. (\ref{eqn:kraw-row})} \nonumber \\
&= \frac{1}{2^n} \sum_{k=0}^n (n-2k)^m \left( \krawel{(n)}{k}{j} + (-1)^{m+j} \krawel{(n)}{k}{j}\right) \nonumber \\
&= \frac{1}{2^n} \sum_{k=0}^n (n-2k)^m \krawel{(n)}{k}{j} \left(1  + (-1)^{m+j} \right) . \nonumber
\end{align}
If $m+j$ is odd all the terms in the sum are zero and, thus, $\vec{\Xi}^{(n)}_{m,j}=0$.
\end{proof}

Theorem~\ref{thm:fmatrix-onemax} claims that $\vec{F}$ depends only on $|x|$ and not on the solution itself. As a consequence, the vector of probabilities $\vec{\pi}(f(M_p(x)))$ depends only on $|x|$. But, according to (\ref{eqn:f-onemax}), $|x|$ is related to the fitness value of a solution by $f(x)=n-2|x|$, and the vector of probabilities $\vec{\pi}$ depends only on the fitness level of the solution we are evaluating. We can then build a matrix, denoted with $\vec{\varpi}$, where element $\vec{\varpi}_{i,j}$ is the probability of generating a solution with fitness $\xi_j$ using bit-flip mutation from a solution with fitness $\xi_i$. This matrix depends on $p$ (probability of mutation), but we omit $p$ in the notation to make it simpler. The expression for $\vec{\varpi}_{i,j}$ can be obtained using simple counting arguments, without the need of the mathematical framework developed in Section~\ref{sec:mutation}. However, in the next theorem we provide an expression for this matrix using our mathematical framework. The purpose of this result is twofold: it proves that  $\vec{\varpi}_{i,j}$ can be computed using our framework and, to the best of our knowledge, it provides a
previously unknown expression for $\vec{\varpi}_{i,j}$ involving Krawtchouk matrices. 

\begin{theorem}
Given the objective function defined in (\ref{eqn:f-onemax}) over $\Bo^n$, the probability of reaching a solution with fitness $\xi_j=2j-n$ when bit-flip mutation with probability $p$ is applied to a solution with fitness $\xi_i=2i-n$ is given by:
\begin{equation}
\label{eqn:varpi-onemax}
\vec{\varpi}_{i,j} = \sum_{l=0}^{n} \krawel{(n)}{j}{l} (1-2p)^l \krawel{(n)}{l}{i} ,
\end{equation}
where $0 \leq i,j \leq n$.
\end{theorem}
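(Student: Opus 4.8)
The plan is to read $\vec{\varpi}_{i,j}$ off the general fitness-distribution formula (\ref{eqn:fitness-probdist}) specialized to the Onemax function (\ref{eqn:f-onemax}), and then collapse every matrix product into a single Krawtchouk ``sandwich''. First I would fix a reference solution $x$ of fitness $\xi_i=2i-n$; since $f(x)=n-2|x|$, this is exactly the statement $|x|=n-i$, so that $\vec{\varpi}_{i,j}$ is the $j$-th component $\vec{\pi}_j(f(M_p(x)))$. Substituting the explicit form of $\vec{F}(x)$ from Theorem~\ref{thm:fmatrix-onemax}, namely $\vec{F}_{m,l}(x)=\vec{\Xi}^{(n)}_{m,l}\krawel{(n)}{l}{|x|}$, into (\ref{eqn:fitness-probdist}) lets me write $\vec{\pi}=(\vec{V}^{T})^{-1}\vec{\Xi}^{(n)}\vec{D}\,\vec{\Lambda}(p)$, where $\vec{D}$ is the diagonal matrix carrying the $|x|$-dependence, $\vec{D}_{l,l}=\krawel{(n)}{l}{|x|}$, and $\vec{\Lambda}_l(p)=(1-2p)^l$.

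The key step is to evaluate the instance-independent factor $(\vec{V}^{T})^{-1}\vec{\Xi}^{(n)}$ in closed form. Rewriting the definition (\ref{eqn:xi}) as a matrix product gives $\vec{\Xi}^{(n)}=2^{-n}\,\vec{W}\kraw{(n)}$ with $\vec{W}_{m,k}=(n-2k)^m$. The crucial observation is $n-2k=\xi_{n-k}$, which shows that $\vec{W}$ is precisely the transposed Vandermonde matrix $\vec{V}^{T}$ with its columns reversed, i.e. $\vec{W}=\vec{V}^{T}\vec{P}$ for the order-reversing permutation matrix $\vec{P}$ (ones on the anti-diagonal, $\vec{P}_{k,k'}=1$ iff $k'=n-k$). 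Consequently the Vandermonde inverse cancels, $(\vec{V}^{T})^{-1}\vec{\Xi}^{(n)}=2^{-n}\,\vec{P}\kraw{(n)}$, whose $(j,l)$ entry is $2^{-n}\krawel{(n)}{n-j}{l}$.

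Assembling the pieces yields $\vec{\varpi}_{i,j}=2^{-n}\sum_{l=0}^{n}\krawel{(n)}{n-j}{l}\,\krawel{(n)}{l}{n-i}\,(1-2p)^l$, and the final step is a sign cleanup using the symmetries of Proposition~\ref{prop:kraw-col}. Applying (\ref{eqn:kraw-row}) to the left factor, $\krawel{(n)}{n-j}{l}=(-1)^l\krawel{(n)}{j}{l}$, and (\ref{eqn:kraw-col}) to the right factor, $\krawel{(n)}{l}{n-i}=(-1)^l\krawel{(n)}{l}{i}$, the two sign factors multiply to $(-1)^{2l}=1$ and disappear, leaving the Krawtchouk sandwich $\sum_{l}\krawel{(n)}{j}{l}(1-2p)^l\krawel{(n)}{l}{i}$ of (\ref{eqn:varpi-onemax}).

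The main obstacle is bookkeeping rather than any deep estimate: correctly matching the column order of $\vec{W}$ to that of $\vec{V}^{T}$ through the reversal $k\mapsto n-k$, and tracking which Krawtchouk index is a row and which a column across the two symmetry relations. I would guard against sign and indexing slips with the sanity check $p=0$, where $\vec{\varpi}$ must reduce to the identity: there $(1-2p)^l\equiv 1$ and the sum becomes $2^{-n}(\kraw{(n)}\kraw{(n)})_{j,i}=2^{-n}\cdot 2^n\,\delta_{i,j}=\delta_{i,j}$, using the involution $\kraw{(n)}\kraw{(n)}=2^n\vec{I}$. This same check pins down the overall normalization, which the derivation produces as the factor $2^{-n}$ (without it, e.g. $n=1$ gives $\vec{\varpi}_{i,i}=2-2p>1$, not a probability), so I would expect the final identity to carry a $2^{-n}$ in front of the sum in (\ref{eqn:varpi-onemax}).
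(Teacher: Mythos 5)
Your proof is correct and follows essentially the same route as the paper's: both substitute the Onemax $\vec{F}(x)$ from Theorem~\ref{thm:fmatrix-onemax} into (\ref{eqn:fitness-probdist}), factor $\vec{\Xi}^{(n)}$ as $2^{-n}\vec{V}^{T}$ times a row-reversed Krawtchouk matrix (your $\vec{P}\kraw{(n)}$ is precisely the paper's $\krawt{(n)}$), cancel the Vandermonde inverse, and remove the $n-j$ and $n-i$ indices via the two symmetries of Proposition~\ref{prop:kraw-col}. Your closing observation about normalization is also right: the paper's own derivation ends with $\frac{1}{2^n}\sum_{l=0}^{n}\krawel{(n)}{j}{l}\vec{\Lambda}_l(p)\krawel{(n)}{l}{i}$, and Proposition~\ref{prop:varpi} later uses the formula with that factor, so the displayed equation (\ref{eqn:varpi-onemax}) is indeed missing a $2^{-n}$ in front of the sum, exactly as your $p=0$ sanity check predicts.
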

\begin{proof}
First, we will express the matrix $\vec{\Xi}^{(n)}$ as a product of two other matrices.
\begin{align}
\vec{\Xi}^{(n)}_{c,j} &= \frac{1}{2^n} \sum_{k=0}^n (n-2k)^c \krawel{(n)}{k}{j} ~~~~ \text{replacing $k$ by $n-k$} \nonumber \\
&= \frac{1}{2^n} \sum_{k=0}^n (2k-n)^c \krawel{(n)}{n-k}{j} = \frac{1}{2^n} \sum_{k=0}^n \xi_k^c \krawel{(n)}{n-k}{j} = \frac{1}{2^n} \sum_{k=0}^n \vec{V}_{k,c} \krawelt{(n)}{k}{j} \nonumber \\
&= \frac{1}{2^n} \left(\vec{V}^T \krawt{(n)}\right)_{c,j} ,
\end{align}
where we used the Vandermonde matrix and we introduced a new matrix $\krawt{(n)}$. This is the Krawtchouk matrix of order $n$ in which the rows are reversed. Then, we have $\vec{\Xi}^{(n)} = 2^{-n} \vec{V}^T \krawt{(n)}$. Let us now define the vector $\krawel{(n)}{*}{j}$ as the $j$-th column of the $n$-th order Krawtchouk matrix. We can write the matrix function $\vec{F}(x)$ defined in (\ref{eqn:fmatrix-onemax}) in a compact way as:
\begin{align}
\vec{F}(x) = \vec{\Xi}^{(n)} \Diag{\krawel{(n)}{*}{|x|}} ,
\end{align}
where the function $\Diag{}$ maps a vector into a matrix having the vector in the diagonal.
If we introduce this compact expression of $\vec{F}(x)$ in (\ref{eqn:fitness-probdist}) we obtain:
\begin{align}
\vec{\pi}(f(M_p(x))) &= \left(\vec{V}^T\right)^{-1} \vec{F}(x) \vec{\Lambda}(p) = \left(\vec{V}^T\right)^{-1} \vec{\Xi}^{(n)} \Diag{\krawel{(n)}{*}{|x|}} \vec{\Lambda}(p) \nonumber \\
	&= \frac{1}{2^n} \left(\vec{V}^T\right)^{-1} \vec{V}^T \krawt{(n)} \Diag{\krawel{(n)}{*}{|x|}} \vec{\Lambda}(p) = \frac{1}{2^n} \krawt{(n)} \Diag{\krawel{(n)}{*}{|x|}} \vec{\Lambda}(p) \nonumber \\
\label{eqn:pi-matrix} &= \frac{1}{2^n} \krawt{(n)} (\vec{\Lambda}(p) \circ \krawel{(n)}{*}{|x|}).
\end{align}
where the symbol $\circ$ denotes the Hadamard product\footnote{The Hadamard product of two matrices with the same dimension is the element-wise product of the matrices.} of matrices and we used the fact that $\Diag{\vec{A}} \vec{B} = \vec{A} \circ \vec{B}$. 

According to the definition of $\vec{\varpi}$, it must be related to $\vec{\pi}$ by the following equation:
\begin{equation}
\vec{\varpi}_{i,j} = \vec{\pi}_j (f(M_p(x))) ~~ \text{for any $x$ such that $f(x)=\xi_i$.}
\end{equation}

Since $\xi_i=2i-n$, $f(x)=n-2|x|=\xi_i$ if and only if $|x|=n-i$ and using (\ref{eqn:pi-matrix}) we have:
\begin{align}
\vec{\varpi}_{i,j} &= \vec{\pi}_j (f(M_p(x))) ~~ \text{for any $x$ with $|x|=n-i$}\nonumber  \\
&= \frac{1}{2^n} \left(\krawt{(n)} (\vec{\Lambda}(p) \circ \krawel{(n)}{*}{n-i})\right)_j 
= \frac{1}{2^n} \sum_{l=0}^{n} \krawelt{(n)}{j}{l} (\vec{\Lambda}(p) \circ \krawel{(n)}{*}{n-i})_l \nonumber \\
&= \frac{1}{2^n} \sum_{l=0}^{n} \krawelt{(n)}{j}{l} \vec{\Lambda}_l(p) \krawel{(n)}{l}{n-i} 
= \frac{1}{2^n} \sum_{l=0}^{n} \krawel{(n)}{n-j}{l} \vec{\Lambda}_l(p) \krawel{(n)}{l}{n-i} & \text{by Proposition~\ref{prop:kraw-col}} \nonumber \\
&= \frac{1}{2^n} \sum_{l=0}^{n} \krawel{(n)}{j}{l} (-1)^l\vec{\Lambda}_l(p) \krawel{(n)}{l}{n-i} 
= \frac{1}{2^n} \sum_{l=0}^{n} \krawel{(n)}{j}{l} \vec{\Lambda}_l(p) \krawel{(n)}{l}{i} ,
\end{align}
and we get (\ref{eqn:varpi-onemax}) just considering that $\vec{\Lambda}_l(p) = (1-2p)^l$.
\end{proof}

In the following proposition we provide two properties of the $\vec{\varpi}$ matrix that are useful to reduce the computational complexity of its computation.

\begin{proposition}
\label{prop:varpi}
The matrix $\vec{\varpi}$ has the following properties:
\begin{eqnarray}
\label{eqn:quasi-sim-varpi}
&& \comb{n}{i} \vec{\varpi}_{i,j} = \comb{n}{j} \vec{\varpi}_{j,i} , \\
&& \vec{\varpi}_{n-i,n-j}=\vec{\varpi}_{i,j} ,
\end{eqnarray}
where $0 \leq i,j \leq n$.
\end{proposition}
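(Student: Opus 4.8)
The plan is to start from the closed form for $\vec{\varpi}_{i,j}$ established in (\ref{eqn:varpi-onemax}), namely $\vec{\varpi}_{i,j} = \sum_{l=0}^{n} \krawel{(n)}{j}{l} (1-2p)^l \krawel{(n)}{l}{i}$, and to prove each identity by manipulating the two Krawtchouk factors sitting inside that sum. The second identity is the easier one, so I would dispatch it first. Substituting $i \mapsto n-i$ and $j \mapsto n-j$ gives $\vec{\varpi}_{n-i,n-j} = \sum_{l=0}^{n} \krawel{(n)}{n-j}{l} (1-2p)^l \krawel{(n)}{l}{n-i}$. I would then apply (\ref{eqn:kraw-row}) to the first factor, so that $\krawel{(n)}{n-j}{l} = (-1)^{l}\krawel{(n)}{j}{l}$, and (\ref{eqn:kraw-col}) to the last factor, so that $\krawel{(n)}{l}{n-i} = (-1)^{l}\krawel{(n)}{l}{i}$. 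The two signs multiply to $(-1)^{2l}=1$, so the summand reverts term by term to that of (\ref{eqn:varpi-onemax}), and $\vec{\varpi}_{n-i,n-j}=\vec{\varpi}_{i,j}$ follows immediately.

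For the first identity I would aim to show that both $\comb{n}{i}\vec{\varpi}_{i,j}$ and $\comb{n}{j}\vec{\varpi}_{j,i}$ equal the manifestly $(i,j)$-symmetric quantity $\sum_{l=0}^{n} \comb{n}{l}(1-2p)^{l}\krawel{(n)}{i}{l}\krawel{(n)}{j}{l}$. The one ingredient that makes this work is the \emph{weighted symmetry} of the Krawtchouk matrix, $\comb{n}{a}\krawel{(n)}{b}{a} = \comb{n}{b}\krawel{(n)}{a}{b}$. Granting this, I would push $\comb{n}{i}$ through the factor $\krawel{(n)}{l}{i}$ (using $\comb{n}{i}\krawel{(n)}{l}{i} = \comb{n}{l}\krawel{(n)}{i}{l}$), which turns $\comb{n}{i}\vec{\varpi}_{i,j}$ into the symmetric sum above; the identical maneuver applied to $\comb{n}{j}\krawel{(n)}{l}{j}$ inside $\comb{n}{j}\vec{\varpi}_{j,i}$ produces the very same sum, so the two are equal.

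The real work, and the step I expect to be the main obstacle, is establishing the weighted symmetry $\comb{n}{a}\krawel{(n)}{b}{a} = \comb{n}{b}\krawel{(n)}{a}{b}$, since it is not among the Krawtchouk identities recorded earlier. I would obtain it by a double-counting argument over Walsh functions applied to the single double sum $\sum_{|w|=a}\sum_{|s|=b}\psi_{w}(s)$, where $w,s \in \Bo^n$ range over the indicated orders. Summing over $s$ first and invoking Lemma~\ref{lem:sum-walsh} together with the symmetry $\psi_{w}(s)=\psi_{s}(w)$ from (\ref{eqn:walsh-swap}) gives $\sum_{|s|=b}\psi_{w}(s) = \krawel{(n)}{b}{a}$ for each of the $\comb{n}{a}$ strings $w$, so the total equals $\comb{n}{a}\krawel{(n)}{b}{a}$. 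Summing over $w$ first gives $\sum_{|w|=a}\psi_{w}(s) = \krawel{(n)}{a}{b}$ for each of the $\comb{n}{b}$ strings $s$, so the total equals $\comb{n}{b}\krawel{(n)}{a}{b}$. Equating the two evaluations yields the symmetry, after which the first identity falls out exactly as described above.
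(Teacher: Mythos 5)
Your proposal is correct and follows essentially the same route as the paper: both identities are derived by manipulating the two Krawtchouk factors in the closed form (\ref{eqn:varpi-onemax}), with the second identity handled exactly as in the paper (the two $(-1)^l$ signs from Proposition~\ref{prop:kraw-col} cancelling term by term), and the first identity handled via the weighted symmetry $\comb{n}{i}\krawel{(n)}{j}{i} = \comb{n}{j}\krawel{(n)}{i}{j}$. The only difference is that the paper simply cites this weighted symmetry from \cite[p.~179]{Terras1999}, whereas you prove it from scratch by double counting $\sum_{|w|=a}\sum_{|s|=b}\psi_{w}(s)$ using Lemma~\ref{lem:sum-walsh} and the swap property (\ref{eqn:walsh-swap}); that argument is valid and makes the proof self-contained within the paper's own toolkit, at the cost of a small amount of extra work.
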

\begin{proof}
The first property is a consequence of an analogous property of the Krawtchouk matrices: $\comb{n}{j} \krawel{(n)}{i}{j}=\comb{n}{i} \krawel{(n)}{j}{i}$ \cite[p. 179]{Terras1999}. We can write:
\begin{align}
\comb{n}{i} \vec{\varpi}_{i,j} &= \frac{1}{2^n} \sum_{l=0} \krawel{(n)}{j}{l} \vec{\Lambda}_l(p) \comb{n}{i} \krawel{(n)}{l}{i} 
= \frac{1}{2^n} \sum_{l=0} \comb{n}{l} \krawel{(n)}{j}{l} \vec{\Lambda}_l(p) \krawel{(n)}{i}{l} \nonumber \\
&= \frac{1}{2^n} \sum_{l=0} \comb{n}{j} \krawel{(n)}{l}{j} \vec{\Lambda}_l(p) \krawel{(n)}{i}{l} 
= \comb{n}{j} \vec{\varpi}_{j,i}. \nonumber
\end{align}

The second property is a consequence of Proposition~\ref{prop:kraw-col}:
\begin{align}
\vec{\varpi}_{n-i,n-j} &= \frac{1}{2^n} \sum_{l=0} \krawel{(n)}{n-j}{l} \vec{\Lambda}_l(p) \krawel{(n)}{l}{n-i}
= \frac{1}{2^n} \sum_{l=0} (-1)^l \krawel{(n)}{j}{l} \vec{\Lambda}_l(p) (-1)^l \krawel{(n)}{l}{i} \nonumber \\
&= \frac{1}{2^n} \sum_{l=0} \krawel{(n)}{j}{l} \vec{\Lambda}_l(p) \krawel{(n)}{l}{i} = \vec{\varpi}_{i,j}.\nonumber
\end{align}
\end{proof}

At this point we can discuss the utility of the $\vec{\varpi}$ matrix. We can see $\vec{\varpi}$ as a practical substitute for all the probability vectors $\vec{\pi}(M_p(f(x)))$ in the case of the objective function $f(x)$ defined in (\ref{eqn:f-onemax}). In general, the components of the previous vector depend on the solution $x$. However, in the particular case of the Onemax-related function (\ref{eqn:f-onemax}) the components depend only on the fitness level $\xi_i$ the solution has. This way we can forget the concrete solution $x$ and focus only on the fitness levels $\xi_i$. Furthermore, the number of fitness levels is $n+1$ and the complexity of computing any element of $\vec{\varpi}$ using (\ref{eqn:varpi-onemax}) is $O(n)$, where we assume that the Krawtchouk matrix $\kraw{(n)}$ is precomputed\footnote{Krawtchouk matrix $\kraw{(n)}$ can be precomputed in $O(n^3)$ using Proposition 2.1 of~\cite{Feinsilver2005}.}. This means that we can compute the probabilities of reaching any fitness level from any other one after bit-flip mutation in $O(n^3)$. That is, we obtain in polynomial time a practical piece of information that summarizes the behavior of bit-flip mutation in this problem. We will see in Section~\ref{sec:applications} how this information can be used.

We derived the $\vec{\varpi}$ matrix for only one objective function. Now we wonder if similar $\vec{\varpi}$ matrices can be derived for other objectives functions. The answer to this question is not easy in general, but the next results gives a first answer in this line. Let us first formally define the property that allows one to compute a matrix like $\vec{\varpi}$.

\begin{definition}
Let $f(x)$ be an objective function and let us call $\xi_0 < \xi_1 < \ldots < \xi_{q-1}$ to the different values it can take. We say that the function $f$ \emph{has a fitness-dependent distribution} for a unary operator if the probability distribution of the objective value after applying the operator to any solution does only depend on the objective value of the initial solution. In formal terms, if $U(x) \in \Bo^n$ is a random variable that represents the application of the unary $U$ operator to $x$, we have
\begin{equation}
\forall x,y \in \Bo^n, f(x)=f(y) \Rightarrow \Prob{f(U(x))=\xi_j} = \Prob{f(U(y))=\xi_j} ,
\end{equation}
for all the possible $\xi_j$ values. If this happens, then we can define a matrix $\vec{\varpi}$ whose elements are:
\begin{equation}
\vec{\varpi}_{i,j} = \ProbCond{f(U(x))=\xi_j}{f(x)=\xi_i} .
\end{equation}
\end{definition}

There is a trivial family of functions having fitness-dependent distributions for any unary operator. It is the family of injective functions. In these functions each particular solution has a unique image and the fitness-dependency condition trivially holds. However, the probability matrix in this case has size $2^n \times 2^n$ (the size of the search space squared), what makes this treatment impractical. Even simple linear pseudo-Boolean functions (such as BINVAL) can have this property. 
The next theorem claims that the property of having a fitness-dependent distribution can be kept even after some simple manipulations of the fitness function.

\begin{theorem}
\label{thm:transformation}
Let $g(x)$ be an objective function having a fitness-dependent distribution for the unary operator $U$ and let us call $\vec{\varpi}^{(g)}$ the associated probability matrix, where we used the name of the function as superindex.
Then, the function $f(x)$, which is a composition of $g(x)$ with another function, also has a fitness-dependent distribution for $U$ under the following conditions:
\begin{itemize}
\item When $f(x)=h(g(x))$ for $h$ a strictly increasing function. The probability matrix does not change: $\vec{\varpi}^{(f)}=\vec{\varpi}^{(g)}$.
\item When $f(x)=h(g(x))$ for $h$ a strictly decreasing function. The probability matrix flips its rows and columns: $\vec{\varpi}^{(f)}_{i,j}=\vec{\varpi}^{(g)}_{(q-1)-i,(q-1)-j}$.
\item When $f(x)=g(x \oplus u)$ for $u \in \Bo^n$ and $U$ commute with the $\oplus$ operator: $\Prob{U(x\oplus u)=y}=\Prob{U(x)\oplus u= y} \forall x,u \in \Bo^n$. The probability matrix does not change: $\vec{\varpi}^{(f)}=\vec{\varpi}^{(g)}$.
\end{itemize}
\end{theorem}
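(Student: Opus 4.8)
The plan is to verify each of the three cases directly from the definition of a fitness-dependent distribution. For the first case, $f(x)=h(g(x))$ with $h$ strictly increasing, the key observation is that a strictly increasing $h$ is a bijection between the value set $\{\xi_0^{(g)}<\cdots<\xi_{q-1}^{(g)}\}$ of $g$ and the value set of $f$, and it preserves the ordering, so $\xi_i^{(f)}=h(\xi_i^{(g)})$ for each $i$. Consequently $f(x)=f(y)$ if and only if $g(x)=g(y)$, which means the fitness-dependency hypothesis transfers verbatim from $g$ to $f$. The probabilities are then identified index-by-index: $\Prob{f(U(x))=\xi_j^{(f)}}=\Prob{h(g(U(x)))=h(\xi_j^{(g)})}=\Prob{g(U(x))=\xi_j^{(g)}}$, using injectivity of $h$, which gives $\vec{\varpi}^{(f)}=\vec{\varpi}^{(g)}$.

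For the second case, $h$ strictly decreasing, the same reasoning applies except that $h$ now \emph{reverses} the ordering of the value set, so the $i$-th smallest value of $f$ corresponds to the $i$-th largest value of $g$, i.e.\ $\xi_i^{(f)}=h(\xi_{(q-1)-i}^{(g)})$. Fitness-dependency still transfers since $h$ is injective, and tracking the reindexing through $\Prob{f(U(x))=\xi_j^{(f)}}=\Prob{g(U(x))=\xi_{(q-1)-j}^{(g)}}$ yields $\vec{\varpi}^{(f)}_{i,j}=\vec{\varpi}^{(g)}_{(q-1)-i,(q-1)-j}$.

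For the third case, $f(x)=g(x\oplus u)$, the value sets of $f$ and $g$ coincide (composing with the bijection $x\mapsto x\oplus u$ only permutes the domain), so $\xi_i^{(f)}=\xi_i^{(g)}$. The plan is to use the commutation hypothesis $\Prob{U(x\oplus u)=y}=\Prob{U(x)\oplus u=y}$ to rewrite the distribution of $f(U(x))$ in terms of $g$ applied to a shifted argument. Concretely, conditioning on $f(x)=\xi_i$ means $g(x\oplus u)=\xi_i$; setting $x'=x\oplus u$ and pushing the operator through the shift gives $\Prob{f(U(x))=\xi_j}=\Prob{g(U(x\oplus u))=\xi_j}=\Prob{g(U(x'))=\xi_j}$ with $g(x')=\xi_i$, which is exactly $\vec{\varpi}^{(g)}_{i,j}$ by the fitness-dependency of $g$. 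Establishing the fitness-dependency of $f$ itself then follows because $f(x)=f(y)$ forces $g(x\oplus u)=g(y\oplus u)$.

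The routine parts are the reindexing bookkeeping in the increasing/decreasing cases; the step requiring the most care is the third, where one must justify that the commutation property lets the operator $U$ be applied \emph{before} the XOR-shift so that the resulting distribution matches an ordinary application of $U$ to a point of the appropriate $g$-fitness level. I expect the main subtlety to be handling this change of variable cleanly at the level of probability distributions rather than individual outcomes, ensuring the shift $x\mapsto x\oplus u$ is consistently applied to both the conditioning event and the target event.
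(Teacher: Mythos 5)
Your proposal is correct and takes essentially the same route as the paper's own proof: in the two monotone cases you identify the value sets via $\xi_i^{(f)}=h(\xi_i^{(g)})$ (respectively $\xi_i^{(f)}=h(\xi_{(q-1)-i}^{(g)})$) and transfer both the fitness-dependency property and the conditional probabilities through the injectivity of $h$, and in the XOR case you use the commutation hypothesis to trade $U(x)\oplus u$ for $U(x\oplus u)$ and condition on $g(x\oplus u)=\xi_i^{(g)}$, exactly as the paper does. There are no gaps.
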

\begin{proof}
First, we can observe that in the three cases the number of values that $f$ can take is the same as the number of values that $g$ can take, $|f(\Bo^n)|=|g(\Bo^n)|=q$. Then, let us denote with $\xi_0^{(g)} < \xi_1^{(g)} < \ldots < \xi_{q-1}^{(g)}$ these values for the $g$ function. We will use the notation $\xi_i^{(f)}$ to refer to the corresponding values of $f$.

Let us start with the first case: $f(x) = h(g(x))$ and $h$ strictly increasing. In this case $\xi_i^{(f)}=h(\xi_i^{(g)})$ for all $0 \leq i < q$. And the property of having a fitness-dependent distribution trivially holds for $f$ since if $f(x)=f(y)$ then $g(x)=g(y)$ and consequently $\Prob{g(U(x))=\xi_j^{(g)}} = \Prob{g(U(y))=\xi_j^{(g)}}$, what implies $\Prob{f(U(x))=\xi_j^{(f)}} = \Prob{f(U(y))=\xi_j^{(f)}}$. Regarding the probability matrix we have:
\begin{align}
\vec{\varpi}^{(f)}_{i,j} &= \ProbCond{f(U(x))=\xi_j^{(f)}}{f(x)=\xi_i^{(f)}} & \text{by definition of $f$} \nonumber \\
&= \ProbCond{h(g(U(x)))=h(\xi_j^{(g)})}{h(g(x))=h(\xi_i^{(g)})} & \text{since $h$ is strictly monotone} \nonumber \\
&= \ProbCond{g(U(x))=\xi_j^{(g)}}{g(x)=\xi_i^{(g)}} = \vec{\varpi}^{(g)}_{i,j}.
\end{align}

In the second case, in which $f(x) = h(g(x))$ for $h$ a strictly decreasing function, we can prove that $f$ has a fitness-dependent distribution with an argument similar to the first case. However, the probability matrix is different due to the change in the order of the values $\xi_i^{(f)}$. Since $h$ is strictly decreasing we have $\xi_i^{(f)}=h(\xi_{(q-1)-i}^{(g)})$. Thus, the elements of the probability matrix are given by:
\begin{align}
\vec{\varpi}^{(f)}_{i,j} &= \ProbCond{f(U(x))=\xi_j^{(f)}}{f(x)=\xi_i^{(f)}} & \text{by definition of $f$} \nonumber \\
&= \ProbCond{h(g(U(x)))=h( \xi_{(q-1)-j}^{(g)})}{h( g(x))=h(\xi_{(q-1)-i}^{(g)})} & \text{$h$ strictly mon.} \nonumber \\
&= \ProbCond{g(U(x))=\xi_{(q-1)-j}^{(g)}}{g(x)=\xi_{(q-1)-i}^{(g)}} = \vec{\varpi}^{(g)}_{(q-1)-i,(q-1)-j}.
\end{align}

Finally, let us prove the last case. The values that the function takes do not change, that is: $\xi_i^{(f)} = \xi_i^{(g)}$. If $f(x)=f(y)$ then $g(x \oplus u)=g(y \oplus u)$, what implies $\Prob{g(U(x\oplus u))=\xi_j^{(g)}} = \Prob{g(U(y \oplus u))=\xi_j^{(g)}}$ by hypothesis. But if $U$ commutes with $\oplus$ then we have:
\begin{align}
\Prob{g(U(x\oplus u))=\xi_j^{(g)}} &= \Prob{g(U(x)\oplus u)=\xi_j^{(g)}} = \Prob{f(U(x))=\xi_j^{(f)}} ,
\end{align}
where we used the definition $f(x)=g(x \oplus u)$ in the last step and the fact that $\xi_i^{(f)} = \xi_i^{(g)}$.

As a consequence we have $\Prob{f(U(x))=\xi_j^{(f)}} = \Prob{f(U(y))=\xi_j^{(f)}}$ and $f$ has a fitness-dependent distribution for $U$. The elements of the probability matrix are:
\begin{align}
\vec{\varpi}^{(f)}_{i,j} &= \ProbCond{f(U(x))=\xi_j^{(f)}}{f(x)=\xi_i^{(f)}} & \text{by definition of $f$} \nonumber \\
&= \ProbCond{g(U(x)\oplus u)=\xi_{j}^{(g)}}{g(x\oplus u)=\xi_{i}^{(g)}} & \text{by commutation of $U$ and $\oplus$} \nonumber \\
&= \ProbCond{g(U(x \oplus u))=\xi_{j}^{(g)}}{g(x \oplus u)=\xi_{i}^{(g)}} = \vec{\varpi}^{(g)}_{i,j}.
\end{align}

\end{proof}

The only condition imposed to the unary operator in the previous theorem is the commutation with $\oplus$. Fortunately, the bit-flip mutation operator commutes with $\oplus$. Furthermore, we provide in the next proposition a result that generalizes that of the mutation operator.

\begin{proposition}
\label{prop:commutation}
If a unary operator $U$ has the property $\Prob{U(x)=y}=f(x\oplus y)$ for a real function $f$ then it commutes with the $\oplus$ operation.
\end{proposition}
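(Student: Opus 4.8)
The plan is to unfold both sides of the commutation identity $\Prob{U(x\oplus u)=y}=\Prob{U(x)\oplus u=y}$ using the hypothesis $\Prob{U(x)=y}=f(x\oplus y)$, and to check that they coincide for all $x,u,y \in \Bo^n$. First I would rewrite the left-hand side by applying the hypothesis with the input string $x\oplus u$ and the target $y$, which gives immediately
\begin{equation}
\Prob{U(x\oplus u)=y} = f\bigl((x\oplus u)\oplus y\bigr). \nonumber
\end{equation}

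Next I would treat the right-hand side. The key observation is that the event $\{U(x)\oplus u=y\}$ is \emph{the same} event as $\{U(x)=y\oplus u\}$: since the map $z\mapsto z\oplus u$ is an involution on $\Bo^n$ (adding $u$ to both sides and using $u\oplus u=0$ recovers the equivalence), it is in particular a bijection, so the two events describe exactly the same outcomes and therefore have equal probability. Applying the hypothesis a second time then yields
\begin{equation}
\Prob{U(x)\oplus u=y} = \Prob{U(x)=y\oplus u} = f\bigl(x\oplus(y\oplus u)\bigr). \nonumber
\end{equation}

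Finally I would invoke the commutativity and associativity of $\oplus$ to note that $(x\oplus u)\oplus y = x\oplus(y\oplus u)$, so the two expressions for $f$ are equal and the commutation property holds. The only step with any genuine content is the rewriting of $\{U(x)\oplus u=y\}$ as $\{U(x)=y\oplus u\}$; everything else is a mechanical substitution into the definition of $f$. I do not expect a real obstacle, provided one is explicit that $\oplus u$ is a bijection on the finite group $\Bo^n$, which is what guarantees that these two events carry equal probability.
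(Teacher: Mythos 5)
Your proposal is correct and is essentially the same argument as the paper's: apply the hypothesis to turn both probabilities into values of $f$, and use the fact that the event $\{U(x)\oplus u = y\}$ coincides with $\{U(x) = y \oplus u\}$ together with associativity/commutativity of $\oplus$. The paper writes it as a single chain of equalities from $\Prob{U(x\oplus u)=y}$ to $\Prob{U(x)\oplus u=y}$, whereas you unfold both sides and meet at $f(x\oplus u\oplus y)$ — a purely presentational difference, with your version being slightly more explicit about why the two events carry the same probability.
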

\begin{proof}
For any $x,u,y \in \Bo^n$ we can write:
\begin{align}
\Prob{U(x\oplus u)=y} &= f(x \oplus u \oplus y) = \Prob{U(x) = u \oplus y} =\Prob{U(x)\oplus u= y} ,
\end{align}
and we have the commutation property.
\end{proof}

The bit-flip  mutation satisfies the hypothesis of the previous proposition, as Lemma~\ref{lem:prob-mut} states.
%
%
Now, we can combine the results of Theorem~\ref{thm:transformation} and Proposition~\ref{prop:commutation} to provide a concrete result for the Onemax-related functions.

\begin{proposition}
\label{prop:onemax-family}
All the objective functions of the form
\begin{equation}
\label{eqn:onemax-family}
g(x) = h(|x \oplus u|) ,
\end{equation}
where $h$ is a strictly monotone function have a fitness-dependent distribution for the bit-flip mutation operator and the probability matrix is the one defined in (\ref{eqn:varpi-onemax}).
\end{proposition}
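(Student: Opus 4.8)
The plan is to obtain $g$ by applying the transformations of Theorem~\ref{thm:transformation}, one after another, to the Onemax-related function $f(x)=n-2|x|$ of (\ref{eqn:f-onemax}), for which the previous theorem already furnishes a fitness-dependent distribution for bit-flip mutation with the probability matrix $\vec{\varpi}$ of (\ref{eqn:varpi-onemax}). Since that function takes $q=n+1$ distinct values (as $|x|$ ranges over $0,1,\dots,n$), the index shift $(q-1)-i$ appearing in Theorem~\ref{thm:transformation} is simply $n-i$, a fact I will exploit below.

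First I would record that bit-flip mutation commutes with $\oplus$: by Lemma~\ref{lem:prob-mut} the transition probability $\Prob{M_p(x)=y}$ depends only on $x\oplus y$, so Proposition~\ref{prop:commutation} applies with $U=M_p$. This makes the third case of Theorem~\ref{thm:transformation} available for our operator. Next I would exhibit $g$ as a composition built from $f$. Setting $\phi(t)=(n-t)/2$, which is strictly decreasing, we have $|x|=\phi(f(x))$, hence $h(|x|)=(h\circ\phi)(f(x))$. The map $h\circ\phi$ is strictly monotone because $h$ is, and its direction is opposite to that of $h$. Applying the first two cases of Theorem~\ref{thm:transformation} to $f$ therefore shows that $x\mapsto h(|x|)$ has a fitness-dependent distribution, and I would then compose with the shift $x\mapsto x\oplus u$ (the third case, legitimate by the commutation just verified) to conclude the same for $g(x)=h(|x\oplus u|)$. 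The count of distinct values is preserved throughout, so $q=n+1$ at every stage.

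The step I expect to carry the weight is verifying that the resulting matrix is $\vec{\varpi}$ itself rather than a rearrangement of it. If $h$ is strictly increasing then $h\circ\phi$ is strictly decreasing, so the second case yields $\vec{\varpi}^{(g)}_{i,j}=\vec{\varpi}_{(q-1)-i,(q-1)-j}=\vec{\varpi}_{n-i,n-j}$; here the symmetry $\vec{\varpi}_{n-i,n-j}=\vec{\varpi}_{i,j}$ from Proposition~\ref{prop:varpi} collapses the flip and returns $\vec{\varpi}$. If instead $h$ is strictly decreasing then $h\circ\phi$ is strictly increasing and the first case gives $\vec{\varpi}^{(g)}=\vec{\varpi}$ directly. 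The final shift by $u$ leaves the matrix unchanged by the third case.

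Thus in every configuration of monotonicity the probability matrix is exactly (\ref{eqn:varpi-onemax}), as claimed. The only genuinely nonroutine ingredient is the observation that the row/column reversal produced in the strictly-increasing branch is neutralized by the self-symmetry of $\vec{\varpi}$ together with the identity $q-1=n$; without it one would prove the weaker statement that the matrix is a flipped copy of $\vec{\varpi}$ rather than $\vec{\varpi}$ itself.
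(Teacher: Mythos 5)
Your proof is correct and follows essentially the same route as the paper's: both express $g$ as a strictly monotone composition of the function $f(x)=n-2|x|$ of (\ref{eqn:f-onemax}) together with a shift by $u$, invoke Theorem~\ref{thm:transformation} and Proposition~\ref{prop:commutation}, and use the symmetry $\vec{\varpi}_{n-i,n-j}=\vec{\varpi}_{i,j}$ of Proposition~\ref{prop:varpi} to absorb the row/column flip arising in the strictly decreasing branch. Your version merely spells out the monotonicity case analysis that the paper compresses into a single observation.
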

\begin{proof}
First, we observe that in the case of the sum of order-1 Walsh functions (\ref{eqn:f-onemax}) the probability matrix $\vec{\varpi}^{(f)}$ does not change even in the case in which we compose the functions with a strictly decreasing function, since $q=n+1$ and $\vec{\varpi}_{n-i,n-j}=\vec{\varpi}_{i,j}$ according to Proposition~\ref{prop:varpi}. Then, based on the results of Theorem~\ref{thm:transformation} and Proposition~\ref{prop:commutation} we only need to express $g(x)$ as a strictly monotone function of the objective function $f$ defined in (\ref{eqn:f-onemax}). This expression is:
\begin{align}
g(x) = h(|x \oplus u|) = h\left(\frac{n-f(x \oplus u)}{2}\right) .
\end{align}
\end{proof}

A direct consequence of the previous result is that even although we focused in this section on the objective function defined in (\ref{eqn:f-onemax}) instead of the Onemax objective function, the probability matrix $\vec{\varpi}$ is valid also for the original Onemax function. Furthermore, it is also valid for any strictly monotone function composed with the Onemax function.

%

\subsection{MAX-SAT}
\label{subsec:max-sat}


The MAX-SAT problem is a well-known $\NP$-hard problem related to the satisfiability of Boolean formulas. An instance of this problem is composed of a set of clauses $C$. A clause is a disjunction of literals, each one being a decision variable $x_i$ or a negated decision variable $\overline{x_i}$. The MAX-SAT problem consists in finding an assignment of Boolean values to the literals in such a way that the number of satisfied clauses is maximum. Let us assume that there exist $n$ Boolean decision variables.
For each clause $c \in C$ we define the vectors $v(c) \in \Bo^n$ and $u(c) \in \Bo^n$ as follows \citep{Sutton2009}:
\begin{eqnarray}
&& v_i(c) = \left\{\begin{array}{ll}
1 & \mbox{if $x_i$ appears (negated or not) in $c$} ,\\
0 & \mbox{otherwise},
\end{array} \right. \\
&& u_i(c) = \left\{\begin{array}{ll}
1 & \mbox{if $x_i$ appears negated in $c$}, \\
0 & \mbox{otherwise}.
\end{array} \right.
\end{eqnarray}

We will omit the argument of the vectors (the clause) when there is no confusion. According to this definition $u \wedge v = u$. We should note here that the previous notation allows us to express the empty clause, $\square$, with $v=u=0$. But it is not possible to express the top clause $\top$. We will need a special treatment of the top clause in the following.

The objective function of MAX-SAT is defined as
\begin{align}
\label{eqn:f-maxsat}f(x) &= \sum_{c \in C} f_{c}(x); ~~~ \mbox{where} \notag \\
f_c(x) &= \left\{\begin{array}{ll}1 & \mbox{if $c$ is satisfied with assignment $x$}, \\ 
0 & \mbox{otherwise}.\end{array} \right.
\end{align}

A clause $c$ is satisfied with $x$ if at least one of the literals is true (we assume the usual identity true=1 and false=0). Using the vectors $v(c)$ and $u(c)$ we can say that $c$ is satisfied by $x$ if $(\overline{x} \wedge  u)  \vee  (x \wedge v \wedge \overline{u}) \neq 0$.

\cite{Sutton2009} provide the Walsh decomposition for the MAX-SAT problem. Let the function $f_{c}$ evaluate one clause $c \in C$. 
The Walsh coefficients for $f_c$ are:
\begin{equation}
\label{eqn:awm}
a_w = \left\{
\begin{array}{ll}
0 & \mbox{if $w \wedge \bar{v}\neq 0$}, \\
1 - \frac{1}{2^{|v|}} & \mbox{if $w=0$}, \\
\frac{-1}{2^{|v|}}\psi_{w} (u) & \mbox{otherwise}.
\end{array} 
\right.
\end{equation}
If the clause $c$ is $\top$ then the only nonzero Walsh coefficient is $a_0=1$.

For the sake of simplicity in the mathematical development, instead of using $f_c$ in the following, it is better to use $g_c(x)=1-f_c(x)$. The Walsh coefficients for $g_c$ are:
\begin{equation}
\label{eqn:awm-g}
a_w = \left\{
\begin{array}{ll}
0 & \mbox{if $c=\top$ or $w \wedge \bar{v}\neq 0$}, \\
\frac{1}{2^{|v|}}\psi_{w} (u) & \mbox{otherwise}.
\end{array} 
\right.
\end{equation}

We will also focus on the fitness function $g(x)$ defined as:
\begin{equation}
\label{eqn:g-maxsat}
g(x) = \sum_{c \in C} g_c(x) = m - f(x).
\end{equation}
Maximizing $f(x)$ is equivalent to minimizing $g(x)$.


The following lemma provides the elementary landscape decomposition of $g_{c}$. 
\begin{lemma}
\label{lem:g-eld}
The $j$-th elementary component of $g_{c}$ is
\begin{align}
\label{eqn:g-eld}
g_{c,[j]}(x) = (1-\delta_c^\top)  \frac{1}{2^{|v(c)|}} \krawel{(|v(c)|)}{j}{|(x \oplus u(c)) \wedge v(c)|} .
\end{align}
\end{lemma}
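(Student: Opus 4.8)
The plan is to start from the Walsh decomposition of $g_c$ given in~(\ref{eqn:awm-g}) and group the Walsh functions by order, exactly as in the definition of the elementary component $g_{c,[j]} = \sum_{|w|=j} a_w \psi_w(x)$. The special case $c=\top$ is handled immediately: by~(\ref{eqn:awm-g}) all Walsh coefficients vanish, so $g_{c,[j]}=0$, which matches the factor $(1-\delta_c^\top)$ in~(\ref{eqn:g-eld}). So I would assume $c\neq\top$ for the remainder and recover the factor at the end.

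For $c\neq\top$, the nonzero coefficients are $a_w = \frac{1}{2^{|v|}}\psi_w(u)$, but only for those $w$ with $w\wedge\bar v = 0$, i.e.\ those $w$ whose support lies inside the support of $v$; equivalently $w \in \Bo^n \wedge v$. First I would write
\begin{align}
g_{c,[j]}(x) = \sum_{w \in \Bo^n \wedge v \atop |w|=j} \frac{1}{2^{|v|}}\psi_w(u)\,\psi_w(x) = \frac{1}{2^{|v|}} \sum_{w \in \Bo^n \wedge v \atop |w|=j} \psi_w(u)\psi_w(x). \nonumber
\end{align}
The key algebraic step is to collapse the product $\psi_w(u)\psi_w(x)$ into a single Walsh evaluation. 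By property~(\ref{eqn:sum-arg}) we have $\psi_w(u)\psi_w(x)=\psi_w(u\oplus x)$, so the sum becomes $\sum_{w\in\Bo^n\wedge v,\,|w|=j}\psi_w(x\oplus u)$. This is precisely a restricted sum of Walsh functions over the mask $v$, which is the object evaluated by Proposition~\ref{prop:krawtchouk}, Eq.~(\ref{eqn:sum-walsh-r}).

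Applying~(\ref{eqn:sum-walsh-r}) with mask $t=v$ and argument $x\oplus u$ gives directly
\begin{align}
\sum_{w \in \Bo^n \wedge v \atop |w|=j} \psi_w(x\oplus u) = \krawel{(|v|)}{j}{|(x\oplus u)\wedge v|}, \nonumber
\end{align}
and multiplying by $\frac{1}{2^{|v|}}$ and reinstating the $(1-\delta_c^\top)$ factor yields exactly~(\ref{eqn:g-eld}). I expect the bookkeeping around the support condition $w\wedge\bar v=0 \Leftrightarrow w\in\Bo^n\wedge v$ to be the only subtle point: one must be careful that the Walsh coefficient formula~(\ref{eqn:awm-g}) is zero precisely off this masked set, so that the elementary-component sum over all $w$ of order $j$ silently restricts to $\Bo^n\wedge v$ with no extra terms. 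Once that identification is made, the result is an immediate application of property~(\ref{eqn:sum-arg}) followed by Proposition~\ref{prop:krawtchouk}, so there is no genuine obstacle beyond this masking argument.
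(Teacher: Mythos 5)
Your proposal is correct and follows essentially the same route as the paper's proof: restrict the elementary-component sum to $w \in \Bo^n \wedge v(c)$ using the Walsh coefficients in~(\ref{eqn:awm-g}), collapse $\psi_w(u)\psi_w(x)$ into $\psi_w(x \oplus u)$ via~(\ref{eqn:sum-arg}), and finish with Proposition~\ref{prop:krawtchouk}, Eq.~(\ref{eqn:sum-walsh-r}). Handling the $c=\top$ case separately at the outset rather than carrying the $(1-\delta_c^\top)$ factor through is only a cosmetic difference.
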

\begin{proof}
With the help of (\ref{eqn:awm-g}) we can write
\begin{align}
g_{c,[j]}(x) &= \sum_{w \in \Bo^n \atop |w|=j} a_w \psi_w(x) = 
(1-\delta_c^\top)  \sum_{w \in \Bo^n \wedge v(c) \atop |w|=j} \frac{1}{2^{|v(c)|}}\psi_{w} (u(c)) \psi_w(x) & \text{by (\ref{eqn:walsh-oplus})} \nonumber \\
&= (1-\delta_c^\top)  \sum_{w \in \Bo^n \wedge v(c) \atop |w|=j} \frac{1}{2^{|v(c)|}}\psi_{w} (x\oplus u(c)) & \text{by (\ref{eqn:sum-walsh-r})} \nonumber \\
&= (1-\delta_c^\top)  \frac{1}{2^{|v(c)|}} \krawel{(|v(c)|)}{j}{|(x \oplus u(c)) \wedge v(c)|} .
\end{align}
\end{proof}
%
%

The next two lemmas provide intermediate results related to the $g_c$ functions that are required in the proof of the main theorem in this section.

\begin{lemma}
\label{lem:power-g}
The $m$-th power of $g_{c}$ for $m > 0$ is:
\begin{equation}
g_{c}^m (x) = g_{c}(x) .
\end{equation}
\end{lemma}
\begin{proof}
The function $g_{c}$ takes only values 0 and 1. If $m>0$ we have $g^m_{c}(x) = g_{c}(x)$.
\end{proof}



\begin{lemma}
\label{lem:prod-g}
Given a family of clauses $c \in C$, the product of functions $g_{c}$ is:
\begin{equation}
\prod_{c \in C} g_{c} (x) = g_{\bigvee C}(x) ,
\end{equation}
where $\bigvee C$ is the disjunction of the family of clauses. For the previous expression to be true even in the case in which the family of clauses is empty we define $g_{\square}(x)=1$.
\end{lemma}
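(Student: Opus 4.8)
The plan is to sidestep all Walsh and Krawtchouk machinery and argue purely at the level of Boolean indicators, exploiting that both sides of the claimed identity are $\{0,1\}$-valued functions on $\Bo^n$. First I would recall that, by definition, $g_c(x) = 1 - f_c(x)$ is precisely the indicator of the event that clause $c$ is \emph{not} satisfied by $x$: it takes the value $1$ when $c$ is false under the assignment $x$ and $0$ otherwise (this is the same $\{0,1\}$-valued observation underlying Lemma~\ref{lem:power-g}). Consequently a product of such indicators behaves like a logical conjunction: $\prod_{c \in C} g_c(x)$ equals $1$ exactly when $g_c(x)=1$ for every $c \in C$, i.e.\ exactly when every clause of the family is simultaneously false under $x$, and it equals $0$ as soon as a single clause is satisfied.

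Next I would identify the right-hand side as the indicator of the \emph{same} event. The string $\bigvee C$ is a single clause whose literals are exactly the literals occurring in the clauses of $C$, so $g_{\bigvee C}(x)=1$ iff this disjunction is false under $x$. The core step is then the elementary Boolean fact that a disjunction of literals is false iff all of its literals are false; since the literals of $\bigvee C$ are the union of the literals of the individual clauses, ``$\bigvee C$ is false under $x$'' is equivalent to ``every literal in every clause $c \in C$ is false under $x$'', which is in turn equivalent to ``every clause $c \in C$ is false under $x$''. Hence $g_{\bigvee C}(x)$ and $\prod_{c \in C} g_c(x)$ are indicators of one and the same event and must coincide for all $x \in \Bo^n$.

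Finally I would dispose of the boundary conventions. When $C$ is empty the empty product is $1$ by convention and $\bigvee C$ is the empty clause $\square$, which has no literals and is therefore never satisfied, giving $g_{\square}(x)=1$; the two agree. I would also note that the indicator argument is robust to the degenerate cases that the excerpt flags for special treatment: if the family forces $\bigvee C$ to be a tautology (for instance if $C$ contains both $x_i$ and $\overline{x_i}$ as clauses, making $\bigvee C = \top$), then $g_{\bigvee C}(x)=0$ for all $x$, while the corresponding factors in the product can never be simultaneously $1$, so the product also vanishes. The only place where any care is needed, and thus the main (albeit modest) obstacle, is stating cleanly the correspondence between the arithmetic product of the $g_c$ and the logical conjunction of ``clause $c$ is false'', together with the literal-level translation of ``$\bigvee C$ is false''; once that bookkeeping is pinned down the identity is immediate.
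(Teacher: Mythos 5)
Your proof is correct and follows essentially the same route as the paper's own proof: both treat $g_c$ as the indicator that clause $c$ is unsatisfied, observe that the product of these indicators is the indicator that no clause in $C$ is satisfied, and identify this with $g_{\bigvee C}$ since the disjunction is itself a clause. Your version is simply more explicit about the boundary cases (the empty family and the tautological $\bigvee C = \top$), which the paper's terse proof glosses over.
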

\begin{proof}
The function $g_c$ is 0 when the clause $c$ is satisfied. Thus, a product of $g_c$ functions will be 0 when any of the clauses is satisfied and 1 if none of the clauses is. This behavior is the same as the function $g$ associated with the disjunction of the clauses. This disjunction is also another clause. 
\end{proof}

The following theorem provides the expression for the matrix function $\vec{F}(x)$ for $g(x)$ defined in (\ref{eqn:g-maxsat}). 

\begin{theorem}

The matrix function $\vec{F}(x)$ for the objective function $g(x)$ defined in (\ref{eqn:g-maxsat}) is:
\begin{equation}
\label{eqn:matrix-f-maxsat}
\vec{F}_{m,j}(x) = \sum_{W \subseteq C \atop \bigvee W \neq \top} \frac{1}{2^{|v(\bigvee W)|}} \vec{\Upsilon}_{m,|W|} \krawel{(|v(\bigvee W)|)}{j}{|(x \oplus u(\bigvee W))\wedge v (\bigvee W)|},
\end{equation}
where the $\vec{\Upsilon}$ matrix is defined by the following recurrence equations:
\begin{eqnarray}
\label{eqn:upsilon-base}
&& \vec{\Upsilon}_{m,0}=0, \\
\label{eqn:upsilon-rec}
&& \vec{\Upsilon}_{m,k} = k^m - \sum_{l=0}^{k-1} \comb{k}{l} \vec{\Upsilon}_{m,l}.
\end{eqnarray}
\end{theorem}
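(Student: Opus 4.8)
The plan is to expand the $m$-th power of $g=\sum_{c\in C} g_c$ directly as a sum over ordered $m$-tuples of clauses, exploit the idempotency of the $g_c$ to collapse each product onto a single disjunctive clause, and then apply the elementary-component formula of Lemma~\ref{lem:g-eld} term by term. Writing
\[
g^m(x)=\Bigl(\sum_{c\in C} g_c(x)\Bigr)^m=\sum_{(c_1,\dots,c_m)\in C^m}\ \prod_{i=1}^{m} g_{c_i}(x),
\]
I would use that each $g_c$ takes only the values $0$ and $1$ (Lemma~\ref{lem:power-g}), so the product $\prod_i g_{c_i}$ depends only on the \emph{set} $W=\{c_1,\dots,c_m\}$ of clauses that actually occur; by Lemma~\ref{lem:prod-g} this product equals $g_{\bigvee W}$. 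Grouping the tuples according to their underlying set $W$ then yields $g^m=\sum_{\emptyset\neq W\subseteq C} N(m,|W|)\,g_{\bigvee W}$, where $N(m,k)$ is the number of $m$-tuples over a $k$-letter alphabet in which every letter appears, i.e.\ the number of surjections from an $m$-set onto a $k$-set. This coefficient depends on $W$ only through $|W|$, which is exactly what makes the clean closed form possible.

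The key step is to identify $N(m,k)$ with $\vec{\Upsilon}_{m,k}$. I would argue combinatorially: counting \emph{all} functions from an $m$-set to a $k$-set by the size $l$ of their image gives $k^m=\sum_{l=0}^{k}\comb{k}{l}\,N(m,l)$, since a function with image of size $l$ is determined by choosing the $l$-element image ($\comb{k}{l}$ ways) and a surjection onto it. Solving for the top term produces exactly $N(m,k)=k^m-\sum_{l=0}^{k-1}\comb{k}{l}N(m,l)$, while $N(m,0)=0$ for $m\geq 1$ because there is no surjection onto the empty set. These are precisely the defining relations (\ref{eqn:upsilon-base})--(\ref{eqn:upsilon-rec}) of $\vec{\Upsilon}$, so $\vec{\Upsilon}_{m,|W|}$ is the correct multiplicity of $g_{\bigvee W}$.

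To finish, I would take order-$j$ elementary components of both sides. Since $f\mapsto f_{[j]}$ is linear, $g^m_{[j]}=\sum_{\emptyset\neq W\subseteq C}\vec{\Upsilon}_{m,|W|}\,(g_{\bigvee W})_{[j]}$, and substituting Lemma~\ref{lem:g-eld} applied to the single clause $\bigvee W$ gives the claimed expression; the factor $(1-\delta_{\bigvee W}^{\top})$ appearing there is exactly what converts the sum into one restricted to the $W$ with $\bigvee W\neq\top$, and the empty-set term is harmless since $\vec{\Upsilon}_{m,0}=0$. The main obstacle is the coefficient bookkeeping: one must recognize that the multinomial expansion collapses to a count of surjections (rather than an unwieldy multinomial sum) and that this count satisfies the stated recurrence. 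A secondary point needing care is the degenerate data---the top clause, the case $W=\emptyset$, and $m=0$---which the indicator $(1-\delta^{\top})$ and the base case $\vec{\Upsilon}_{m,0}=0$ dispose of, the substantive statement being for $m\geq 1$.
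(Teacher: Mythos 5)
Your proposal is correct and is essentially the paper's own argument: the paper expands $g^m$ via the multinomial theorem, collapses the powers and products with Lemmas~\ref{lem:power-g} and~\ref{lem:prod-g}, groups terms by the support set $W$, and defines $\vec{\Upsilon}_{m,k}$ as the sum of multinomial coefficients over positive compositions $i_1+\cdots+i_k=m$ — which is exactly your surjection count $N(m,k)$ — deriving the recurrence from the same identity $k^m=\sum_{l}\comb{k}{l}\vec{\Upsilon}_{m,l}$ that you obtain by classifying functions by image size. Your ordered-tuple/surjection phrasing is only a combinatorial restatement of the paper's multinomial bookkeeping, and your handling of the degenerate cases ($W=\emptyset$, the top clause, $m=0$) matches the paper's as well.
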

\begin{proof}
Let us number the clauses in $C$ from $1$ to $h$ and let us denote with $c_j$ the $j$-th clause. We can write $g^m(x)$ as
\begin{align}
g^m(x) &= \left( \sum_{j=1}^{h} g_{c_j}(x)\right)^m = \sum_{i_1+i_2+\ldots+i_h=m} \comb{m}{i_1, i_2, \ldots, i_h} \prod_{j=1}^{h} g_{c_j}^{i_j}(x) \nonumber \\
\intertext{by Lemma~\ref{lem:power-g}}
&= \sum_{i_1+i_2+\ldots+i_h=m} \comb{m}{i_1, i_2, \ldots, i_h} \prod_{j=1 \atop i_j>0}^{h} g_{c_j}(x) \nonumber \\
\intertext{removing the indices that are zero we can write the sum in an alternative way}
&= \sum_{W \subseteq C} \sum_{i_1+i_2+\ldots+i_{|W|}=m \atop i_1,i_2,\ldots i_{|W|} > 0} \comb{m}{i_1,i_2,\ldots,i_{|W|}} \prod_{c \in W} g_{c}(x) \nonumber \\
&=  \sum_{W \subseteq C} \vec{\Upsilon}_{m,|W|} \prod_{c \in W} g_{c}(x) \nonumber \\
\intertext{by Lemma~\ref{lem:prod-g}}
\label{eqn:f-matrix-g-inter}
&=  \sum_{W \subseteq C} \vec{\Upsilon}_{m,|W|} g_{\bigvee W}(x),
\end{align}
where we defined $\vec{\Upsilon}_{m,k}$ as 
\begin{equation}
\label{eqn:upsilon-def}
\vec{\Upsilon}_{m,k} = \sum_{i_1+i_2+\ldots+i_{k}=m \atop i_1,i_2,\ldots i_{k} > 0} \comb{m}{i_1,i_2,\ldots,i_{k}}.
\end{equation}

Using (\ref{eqn:f-matrix-g-inter}) and (\ref{eqn:g-eld}) we obtain (\ref{eqn:matrix-f-maxsat}).

In order to complete the proof we only need to justify the equations (\ref{eqn:upsilon-base}) and (\ref{eqn:upsilon-rec}) based on the definition (\ref{eqn:upsilon-def}). 
In the following we will use the notation $[k]$ to denote the set of numbers from $1$ to $k$.
When $m,k>0$ the sum of the multinomial coefficients is:
\begin{align}
k^m &= \sum_{i_1+i_2+\ldots+i_k=m} \comb{m}{i_1,i_2,\ldots,i_k} \nonumber \\
\intertext{which we can reorganize in the following way}
&= \sum_{J \subseteq [k] \atop J \neq \emptyset} \sum_{i_1+i_2+\ldots+i_{|J|}=m \atop i_1,i_2,\ldots,i_{|J|}>0} \comb{m}{i_1,i_2,\ldots,i_{|J|}} \nonumber \\
&= \sum_{l=1}^{k} \sum_{J \subseteq [k]\atop |J|=l} \sum_{i_1+i_2+\ldots+i_{|J|}=m \atop i_1,i_2,\ldots,i_{|J|}>0} \comb{m}{i_1,i_2,\ldots,i_{|J|}} \nonumber \\
&= \sum_{l=1}^{k} \sum_{J \subseteq [k]\atop |J|=l} \sum_{i_1+i_2+\ldots+i_l=m \atop i_1,i_2,\ldots,i_l>0} \comb{m}{i_1,i_2,\ldots,i_l}
= \sum_{l=1}^{k} \sum_{J \subseteq [k]\atop |J|=l} \vec{\Upsilon}_{m,l} \nonumber \\
&= \sum_{l=1}^{k} \comb{k}{l} \vec{\Upsilon}_{m,l}  .
\end{align}
In order to extend the previous sum to the value $l=0$ we can just define $\vec{\Upsilon}_{m,0}=0$ as in (\ref{eqn:upsilon-base}) and we obtain the recurrence equation in (\ref{eqn:upsilon-rec}). Now we can observe that (\ref{eqn:upsilon-rec}) is valid even in the case in which $k=0$.
\end{proof}

By the definition of $\vec{\Upsilon}_{m,k}$ in (\ref{eqn:upsilon-def}) it is clear that $\vec{\Upsilon}_{m,k}=0$ if $k>m$. As a consequence, the sum in (\ref{eqn:matrix-f-maxsat}) must consider only the subsets $W$ of at most $m$ elements if we are interested in the elementary landscape decomposition of the $m$-th power of $g(x)$. The computation of the $v$ and $u$ vectors in (\ref{eqn:matrix-f-maxsat}) can be done in $O(n m)$ at most, since we have to explore up to $n$ bits of up to $m$ clauses (it can be much less in practice if the number of literals per clause is low). Thus, the complexity of computing the elementary components of $g^m(x)$ is $O(n m |C|^m)$, where we assume that the matrices $\vec{\Upsilon}$ and $\kraw{(n)}$ are precomputed.

%
%
%
%
%
%
%

\section{Connection to Runtime Analysis}
\label{sec:applications}


The results we present in this section represent, to the best of our knowledge, the first connection between landscape theory and runtime analysis, and this is, in fact, the reason why we think they are relevant. They are an application of the results of Sections~\ref{sec:mutation} and~\ref{sec:study} to the computation of the first hitting time of a $(1+\lambda)$ EA. The results themselves are not new or significant for the runtime community. We use the Markov chain framework by \cite{He:Yao2003}, which has important limitations when the goal is to find asymptotic bounds for runtime. With this framework we are able to compute exact expressions for the expected runtime as a function of $p$, the probability of flipping a bit in the mutation, but we are not able to find asymptotic expressions or make conclusions about the runtime when $n$ is large, which is the main goal of the runtime analysis community.

When one is interested in computing bounds for the runtime required by an evolutionary algorithm to solve an optimization problem, it is quite common to analyze the probability of improving a solution in one iteration of the algorithm. This is the way, for example, in which an upper bound of $O(n \log n)$ is derived for the Onemax problem solved with a $(1+1)$ EA using bit-flip with probability $p=1/n$ \citep[p. 39]{Neumann2010}. These probabilities of improvement are not usually exactly computed, but an asymptotic lower bound of the probability is used instead. Thus, an upper bound of the expected runtime is derived instead of a precise expression.

In Section~\ref{sec:mutation} we showed how we can compute the probability distribution of the objective values after mutation. In Section~\ref{sec:study} we found that for a family of functions that includes Onemax, the probability distribution only depends on the value of the fitness function in the current solution and, thus, we could define a matrix $\vec{\varpi}$, that summarizes the behavior of the algorithm in one step. The question we want to answer in this section is, can we use the $\vec{\varpi}$ matrix to provide an expression of the expected runtime of an evolutionary algorithm? In the next subsections we will show how a precise expression for the expected runtime of a $(1+\lambda)$ EA can be derived using the $\vec{\varpi}$ matrix. In Algorithm~\ref{alg:1+lambda-EA} we show the pseudocode of a $(1+\lambda)$ EA, where (abusing notation slightly) we use $M_p(x)$ to denote a random solution that is the result of applying the bit-flip mutation operator to $x$.

\begin{algorithm}[!ht]
\begin{algorithmic}
\STATE $x$ $\leftarrow$ RandomSolution();
\WHILE{$x$ is not a global optimum}
\FOR{$i=1$ to $\lambda$}
\STATE $y \leftarrow M_p(x)$;
\IF{$f(y) > f(x)$}
\STATE $x \leftarrow y$;
\ENDIF 
\ENDFOR
\ENDWHILE
\end{algorithmic}
\caption{Pseudocode of a $(1+\lambda)$ EA.}
\label{alg:1+lambda-EA}
\end{algorithm}

\subsection{Runtime analysis of $(1+\lambda)$-EAs}

Based on the probability matrix $\vec{\varpi}$, that only assumes one single application of the mutation operator, we can define a new probability matrix $\vec{\varpi}^{(\lambda)}$ related to the generation of $\lambda$ offspring using bit-flip mutation and selecting the best one. This new probability matrix must reduce to $\vec{\varpi}$ when $\lambda=1$. The element $\vec{\varpi}^{(\lambda)}_{i,j}$ is the probability of obtaining a solution with fitness value $\xi_j$ after applying bit-flip mutation $\lambda$ times with probability $p$ to a solution with fitness value $\xi_i$ and taking the offspring with the highest fitness. The following proposition provides an expression for $\vec{\varpi}^{(\lambda)}_{i,j}$.

\begin{proposition}
The probability matrix $\vec{\varpi}^{(\lambda)}$ is defined as
\begin{equation}
\label{eqn:varpi-lambda}
\vec{\varpi}^{(\lambda)}_{i,j} = \left(\sum_{l=0}^j \vec{\varpi}_{i,l} \right)^\lambda  - \left(\sum_{l=0}^{j-1} \vec{\varpi}_{i,l}\right)^\lambda .
\end{equation}
\end{proposition}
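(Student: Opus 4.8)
The plan is to establish the formula for $\vec{\varpi}^{(\lambda)}_{i,j}$ by working with the cumulative distribution function of the best offspring rather than the probability mass function directly, since the maximum of independent random variables is most naturally expressed through cumulative probabilities. First I would fix a parent solution $x$ with fitness $\xi_i$, and generate $\lambda$ offspring $y_1, \dots, y_\lambda$ by applying bit-flip mutation with probability $p$ independently $\lambda$ times. The selection step of the $(1+\lambda)$ EA retains the offspring with the highest fitness value (ties broken arbitrarily, which is immaterial for the distribution of the retained fitness value). Let $Z = \max_{1 \leq k \leq \lambda} f(y_k)$ denote the fitness of the best offspring; then by definition $\vec{\varpi}^{(\lambda)}_{i,j} = \Prob{Z = \xi_j}$ given the parent has fitness $\xi_i$.

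The key observation is that each offspring $y_k$ is produced independently, and by the fitness-dependent distribution property established for the Onemax-related function, the probability that a single offspring attains fitness $\xi_l$ depends only on the parent fitness $\xi_i$ and equals $\vec{\varpi}_{i,l}$. I would introduce the cumulative quantity $C_{i,j} = \sum_{l=0}^{j} \vec{\varpi}_{i,l}$, which is the probability that a single offspring has fitness at most $\xi_j$. Since the $\lambda$ offspring are mutually independent and identically distributed (all produced from the same parent fitness level), the probability that all $\lambda$ of them have fitness at most $\xi_j$ is the product $C_{i,j}^\lambda$. This is precisely the event $Z \leq \xi_j$, so the cumulative distribution function of the best offspring is $\Prob{Z \leq \xi_j} = C_{i,j}^\lambda = \left(\sum_{l=0}^{j} \vec{\varpi}_{i,l}\right)^\lambda$.

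To recover the probability mass function from the cumulative function, I would take the difference between consecutive cumulative values:
\begin{equation}
\vec{\varpi}^{(\lambda)}_{i,j} = \Prob{Z = \xi_j} = \Prob{Z \leq \xi_j} - \Prob{Z \leq \xi_{j-1}} = \left(\sum_{l=0}^j \vec{\varpi}_{i,l} \right)^\lambda - \left(\sum_{l=0}^{j-1} \vec{\varpi}_{i,l}\right)^\lambda, \nonumber
\end{equation}
which is exactly (\ref{eqn:varpi-lambda}). For the boundary case $j=0$ I would adopt the convention that the empty sum $\sum_{l=0}^{-1} \vec{\varpi}_{i,l} = 0$, so that $\vec{\varpi}^{(\lambda)}_{i,0} = \vec{\varpi}_{i,0}^\lambda$, consistent with the event that all offspring land at the lowest fitness level. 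Finally, setting $\lambda = 1$ telescopes the expression back to $\vec{\varpi}^{(\lambda)}_{i,j} = \sum_{l=0}^{j} \vec{\varpi}_{i,l} - \sum_{l=0}^{j-1} \vec{\varpi}_{i,l} = \vec{\varpi}_{i,j}$, confirming the required reduction.

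The argument is essentially routine once the reduction to the cumulative distribution of a maximum of i.i.d.\ variables is made; there is no significant obstacle. The only point requiring mild care is ensuring that the independence of the offspring genuinely justifies raising the single-offspring cumulative probability to the $\lambda$-th power, and that selecting the maximum fitness is correctly captured by the event $Z \leq \xi_j$ being the intersection of the per-offspring events $\{f(y_k) \leq \xi_j\}$. This rests on the fact that the $\lambda$ mutation applications in Algorithm~\ref{alg:1+lambda-EA} are performed independently from the same parent, which the pseudocode makes explicit.
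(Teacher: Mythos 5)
Your proposal is correct and follows essentially the same argument as the paper: the paper also computes $\vec{\varpi}^{(\lambda)}_{i,j}$ as the probability that all $\lambda$ independent offspring have fitness at most $\xi_j$ minus the probability that all have fitness strictly below $\xi_j$, which is precisely your difference of cumulative probabilities $\Prob{Z \leq \xi_j} - \Prob{Z \leq \xi_{j-1}}$. Your version is merely more explicit about the i.i.d.\ structure, the $j=0$ boundary convention, and the $\lambda=1$ sanity check, all of which the paper leaves implicit.
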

\begin{proof}
The element $\vec{\varpi}^{(\lambda)}_{i,j}$ is exactly the probability of obtaining at least one solution with fitness value $\xi_j$ and no solution with a higher fitness value in the $\lambda$ trials. This is exactly the probability of obtaining the $\lambda$ solutions with fitness value lower than or equal to $\xi_j$ (first term) minus the probability of obtaining the solutions with fitness value lower than $\xi_{j}$ (second term).
\end{proof}

We can observe in (\ref{eqn:varpi-lambda}) that $\vec{\varpi}^{(1)}=\vec{\varpi}$. We must recall here that $\vec{\varpi}^{(\lambda)}$ is a polynomial in $p$ because $\vec{\varpi}$ is also a polynomial in $p$. Now we analyze the runtime of the $(1+\lambda)$ EA with the help of the Markov chain framework presented by \cite{He:Yao2003}. The first step is to present the transition matrix (we assume maximization):
\begin{equation}
\label{eqn:transition-matrix}
\vec{P}^{(\lambda)}_{i,j} = \left\{\begin{array}{cc}
\vec{\varpi}^{(\lambda)}_{i,j} & \text{if $j > i$,} \\
\sum_{l=0}^{j} \vec{\varpi}^{(\lambda)}_{i,l} & \text{if $i = j$,}\\
0 & \text{if $j < i$,}
\end{array}\right.
\end{equation}
where $0 \leq i,j < q$. If the probability $p$ of flipping a bit is $0 < p < 1$, then the previous transition matrix will have only one absorbing state that corresponds to the solutions with the highest fitness value $\xi_{q-1}$.

Now we can use some results from the Markov chain theory \citep{Iosifescu1980} to compute the expected runtime of the $(1+\lambda)$ EA. The $\vec{P}$ matrix can be written in the form:
\begin{equation}
\left(
\begin{array}{cc}
\vec{T} & \vec{R} \\
\vec{0} & 1 \\
\end{array}
\right) ,
\end{equation}
where $\vec{R}$ is a column vector and $\vec{T}$ is a $(q-1)\times (q-1)$ submatrix with the transition probabilities of the transient states in the Markov chain. The fundamental matrix is $\vec{N}=(\vec{I}-\vec{T})^{-1}$, and the expected runtime (number of iterations) of the $(1+\lambda)$ EA starting in a solution with fitness value $\xi_i$ and $0 \leq i < q-1$ is given by the $i$-th component of the vector of mean absorption times $\vec{t}_i$. This vector is computed as $\vec{t} = \vec{N} \vec{1}$. From a computational point of view, the vector can be efficiently computed by solving the following linear equation system:
\begin{equation}
(\vec{I} - \vec{T}) \vec{t} = \vec{1} ,
\end{equation}
since $\vec{I}-\vec{T}$ is an upper triangular matrix and the system can be solved in $O(q^2)$. The components of $\vec{t}$ will be, in general, fractions of polynomials in $p$. The vector $\vec{t}$ has only $q-1$ components indexed by number from 0 to $q-2$, but for the sake of completeness we can extend it with an additional component $\vec{t}_{q-1}=0$, which is the expected runtime of the algorithm when the initial solution is the global optimum.

Assuming that the algorithm starts from a random solution, the expected runtime is given by
\begin{equation}
\label{eqn:expected-1+l}
\Exp{\tau} = \sum_{l=0}^{q-1} \frac{|X_l|}{|X|} \vec{t}_l ,
\end{equation}
where $X$ is the set of solutions and $X_l$ is the set of solutions with fitness $f(x)=\xi_l$.

The expected runtime (\ref{eqn:expected-1+l}) is not an approximation or bound, it is the exact expression of the expected runtime as a function of $p$, the probability of flipping a bit. However, this expression will only be practical if 1) we can define the $\vec{\varpi}$ matrix in the problem we are interested and 2) the evaluations of this matrix can be efficiently done in a computer. These conditions limit the number of problems whose runtime can be analyzed using this approach. However, we found in Section~\ref{subsec:onemax} that for any monotone function of Onemax we can efficiently construct and evaluate the $\vec{\varpi}$ matrix. In the next subsection we focus on Onemax.

\subsection{Runtime of $(1+\lambda)$ EA for Onemax}

The Onemax problem has been studied in the literature on runtime analysis many times. \cite{Garnier1999} derived an expression for the transition probability matrix of the $(1+1)$ EA for the Onemax function that was later reported by \cite{He:Yao2003}. Their expression is the same as (\ref{eqn:transition-matrix}) for $\lambda=1$. Tight upper and lower bounds have been derived for the $(1+1)$ EA using different mutation rates. 
Recently, \citet{Witt2013tight} proved that the $(1+1)$ EA optimizes all linear functions (including Onemax) in expected time $en \ln n + O(n)$, and the expected optimization time is polynomial as long as $p = O((\log n)/n)$ and $p = \Omega(1/poly(n))$. \citet{Jansen2005} proved that using a $(1+\lambda)$ EA the expected number of iterations after reaching the global optimum is $O(n \log n/\lambda +n)$.
In summary, the Onemax problem is well-known and the content of this section adds not too much to the current knowledge on this problem. The goal of this section is, thus, to obtain the same results from a different perspective, that of landscape analysis. The advantage of this approach is that with an exact expression of the expected runtime we can find a precise answer to some concrete questions for some particular instances. The disadvantage is that the expression is quite complex to analyze and we need to use numerical methods, so it is not easy to generalize the answers obtained.

Let us first start by studying the $(1+1)$ EA. Taking into account the $\vec{\varpi}$ matrix defined in (\ref{eqn:varpi-onemax}) for Onemax, the expected number of iterations can be exactly computed as a function of $p$, the probability of flipping a bit. Just for illustration purposes, we present the expressions of such expectation for $n\leq 3$:
\begin{align}
\label{eqn:exp-n1}\Exp{\tau} &= \frac{1}{2 p} & \text{for $n=1$,} \\
\label{eqn:exp-n2}\Exp{\tau} &= \frac{7-5 p}{4 (p-2) (p-1) p} & \text{for $n=2$,} \\
\label{eqn:exp-n3}\Exp{\tau} &= \frac{26 p^4-115 p^3+202 p^2-163 p+56}{8 (p-1)^2 p \left(p^2-3 p+3\right) \left(2 p^2-3 p+2\right)} & \text{for $n=3$.}
\end{align}

We can observe how the expressions grow very fast as $n$ increases. The factor $p(p-1)$ is always present in the denominator for $n>2$, what means that when $p$ takes extreme values, $p=0$ or $p=1$, it is not possible to reach the global optimum from any solution, since the algorithm will keep the same solution if $p=0$ or will alternate between two solutions if $p=1$. However, when $n=1$ the probability $p=1$ is valid, furthermore, is optimal, because if the global solution is not present at the beginning we can reach it by alternating the only bit we have. In Figure~\ref{fig:onemax} we show the expected runtime as a function of the probability of flipping a bit for $n=1$ to 7. We can observe how the optimal probability (the one obtaining the minimum expected runtime) decreases as $n$ increases.

\begin{figure}[!ht]
\centering
\includegraphics[width=0.95\textwidth]{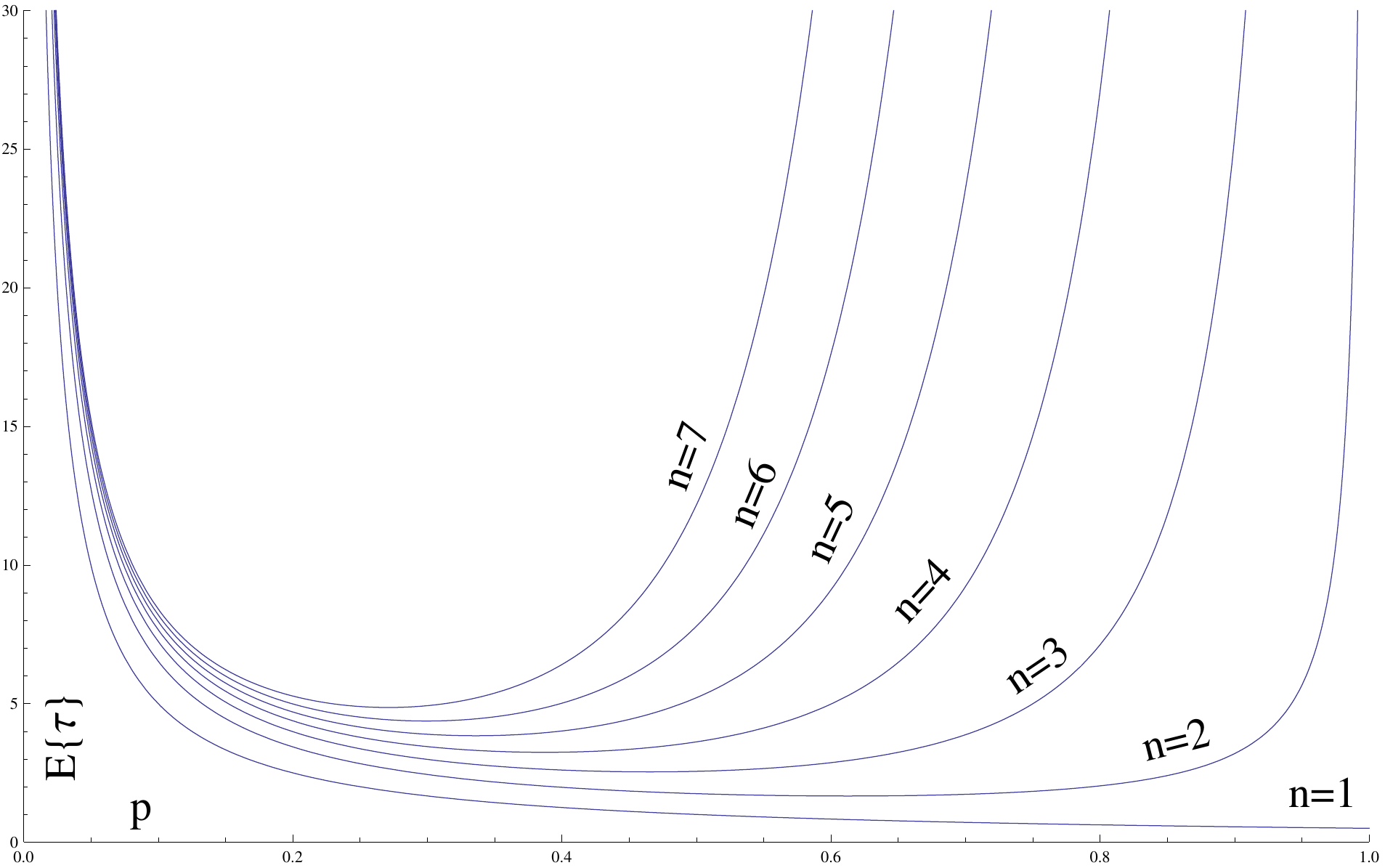}
\caption{Expected runtime of the $(1+1)$ EA for Onemax as a function of the probability of flipping a bit. Each line correspond to a different value of $n$ from 1 to 7.}
\label{fig:onemax}
\end{figure}

Having the exact expressions we can compute the optimal mutation probability for each $n$ by using classical optimization methods in one variable. In particular, for $n=1$ the optimal value is $p=1$ as we previously saw and for $n=2$ we have to solve a cubic polynomial in order to obtain the exact expression. The result is:
\begin{equation}
p_2^* = \frac{1}{5} \left(6-
\sqrt[3]{\frac{2}{23-5 \sqrt{21}}}
-\sqrt[3]{\frac{23-5 \sqrt{21}}{2}}
\right) \approx 0.561215 ,
\end{equation}
which is slightly higher than the recommended value $p=1/n$. Observe, however, that this result does not contradict the ones in~\cite{Witt2013tight}, since Witt's work provides asymptotic expressions and discards low-order terms, while we are working here with expressions for low $n$ values. In accordance to Witt's work, we would expect the optimal probability of mutation $p_n^*$ to approximate the $1/n$ value, and this is what happens.
As we increase $n$, analytical responses for the optimal probability are not possible and we have to apply numerical methods. In our case we used the Newton method in order to find a root of the equation $\frac{d \Exp{\tau}}{d p}=0$. Some results up to $n=100$ can be found in Table~\ref{tab:optimal-p}. A fast observation of the results reveals that the optimal probability is always a little bit higher than the recommended $p=1/n$.

\begin{table}[!ht]
\begin{center}
\begin{tabular}{|cr@{.}lr@{.}l||cr@{.}lr@{.}l|}
\hline
$n$ 
& \multicolumn{2}{c}{$p_n^*$} 
& \multicolumn{2}{c||}{$\Exp{\tau}$} 
& $n$ 
& \multicolumn{2}{c}{$p_n^*$} 
& \multicolumn{2}{c|}{$\Exp{\tau}$}  \\
\hline
1 & 1&00000 & 0&500 & 20 & 0&06133 & 127&453 \\
2 & 0&56122 & 2&959 & 30 & 0&04046 & 222&079 \\
3 & 0&38585 & 6&488 & 40 & 0&03009 & 325&900 \\
4 & 0&29700 & 10&808 & 50 & 0&02391 & 436&580 \\
5 & 0&24147 & 15&758 & 60 & 0&01981 & 552&734 \\
6 & 0&20323 & 21&222 & 70 & 0&01690 & 673&445 \\
7 & 0&17526 & 27&120 & 80 & 0&01473 & 798&059 \\
8 & 0&15391 & 33&391 & 90 & 0&01304 & 926&088 \\
9 & 0&13710 & 39&990 & 100 & 0&01170 & 1057&151 \\
10 & 0&12352 & 46&882 &    &  \multicolumn{2}{c}{~} &      \multicolumn{2}{c|}{~}         \\
\hline 
\end{tabular}
\end{center}
\caption{Optimal probability values for an $(1+1)$ EA solving Onemax.}
\label{tab:optimal-p}
\end{table}

Before we go further, we could question the use of an approximate method (the Newton method) over an exact expression to find the optimal probability for mutation. In particular, as we get approximate values anyway, why not directly run the $(1+\lambda)$ EA enough times to get an accurate enough approximated value for the optimal probability? Although in both cases we end with an approximated value, the approximation is done at a different level and using the Newton method, we get higher accuracy in less time. Let us explain this in detail. First, we have to say that given a probability of bit-flip $p$ the computation of the first-hitting time using~(\ref{eqn:expected-1+l}) is very fast and the result we obtain is exact up to the machine precision. If we want to obtain the expected first-hitting time running the algorithm we need to run it several thousand times to get a good confidence interval and the final approximation will be coarser than using the exact formulas. Just as an illustration we run the algorithm 1,000 times for $n=100$, $\lambda=1$ and $p=0.01$, and it took 193 s to find an expected runtime of $1058.60\pm21.73$ with 95\% confidence. On the other hand the evaluation of the exact expression was done in $0.837$ s and found an expected runtime of $1069.54$ with 11 decimals precision\footnote{The experiments were done in a MacBook Pro with an Intel Core i7 processor running at 2.8~GHz and 4~GB of DDR3 RAM memory.}.
Second, if we want to obtain the optimal probability for mutation we have to apply a numerical method. Using the exact formulas, we applied the Newton method and after several steps we obtain an approximated optimal probability $p^*$. In order to get an optimal mutation probability using the completely empirical approach we need to apply again a numerical method like the false position method (the Newton method cannot be applied now because we cannot compute derivatives). In order to evaluate each probability $p$ we need to run the algorithm thousands of times and, in the end we can only obtain an approximated value for the expected runtime. As an illustration we can say that the execution of the Newton method for $n=100$ stopped after $7.7$~s, which is the time required to run the $(1+1)$ EA algorithm around $40$ times in our machine. However, with 40 independent runs the precision of the expected runtime is very low. In summary, the completely empirical method requires much more time than the Newton method applied to the exact formulas for the same precision.

From previous work we know that the optimal probability is in the form $c/n$ for a constant $c$. We can use the results obtained by numerical analysis to find the value of $c$ and check the dependency with $n$. That is, using the optimal probability $p_n^*$ shown in Table~\ref{tab:optimal-p} we can compute $c_n = p_n^* n$ in order to see what is the value of $c_n$. In Figure~\ref{fig:constant} we plot $c_n$ as a function of $n$. We can observe that the optimal probability is not $p=c/n$ for a fixed $c$. The value of the constant $c_n$ is higher than 1 and depends on $n$. However, we can observe a clear trend $c_n \rightarrow 1$ as $n$ tends to $\infty$. The maximum value for $c_n$ is reached in $n=11$ and the value is $c_{11}=1.23559$.

\begin{figure}[!ht]
\centering
\includegraphics[width=0.95\textwidth]{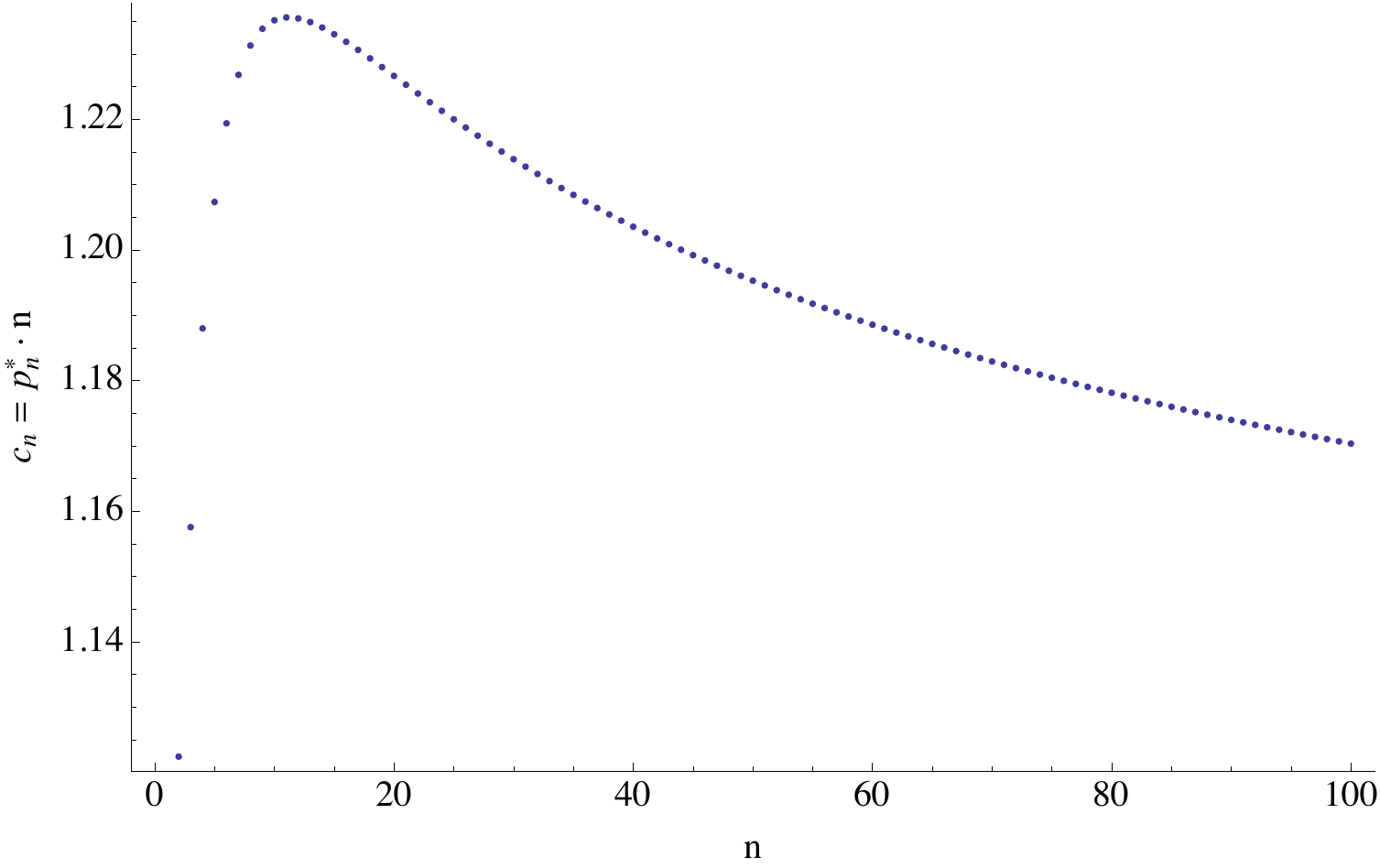}
\caption{The value of the constant $c$ in the optimal probability $p=c/n$ for Onemax instances from $2$ to $100$.}
\label{fig:constant}
\end{figure}

It is also well-known that for this optimal probability the expected runtime is $\Theta (n \log n)$. We can also check this using numerical analysis. We used the optimal expected runtime for this computation and found the best fit model including $n$ and $n \log n$ terms. The result is:
\begin{equation}
\Exp{\tau} \approx  -1.51165 n + 2.62161 n \log n ,
\end{equation}
where we can observe how the factor in front of the $n \log n$ term is near $e$, which is the theoretically predicted factor for $c=1$ and large values of $n$~\citep{Witt2013tight}.

Let us now study the expected runtime for different values of $\lambda$. In this case we fix the size of the problem to $n=50$ and we analyze the expected runtime using $p=1/n$ for $\lambda=1$ to 50. The results are shown in Figure~\ref{fig:lambda} in both, natural scale and log-log scale.

\begin{figure}[!ht]
\centering
\subfigure[Expected runtime as a function of $\lambda$]{
\includegraphics[width=0.47\textwidth]{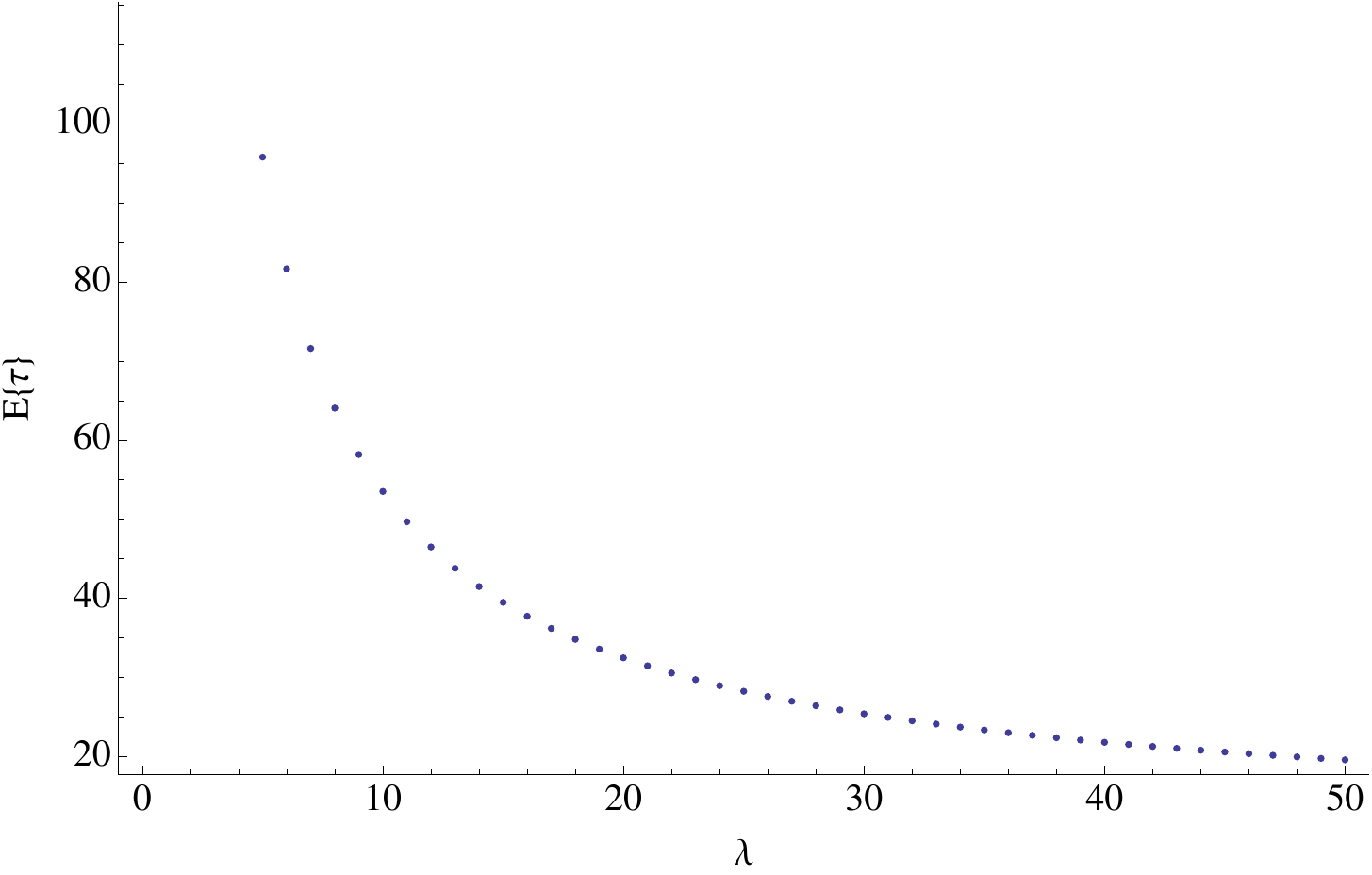}}
\subfigure[Log-log plot of the expected runtime against $\lambda$]{
\includegraphics[width=0.47\textwidth]{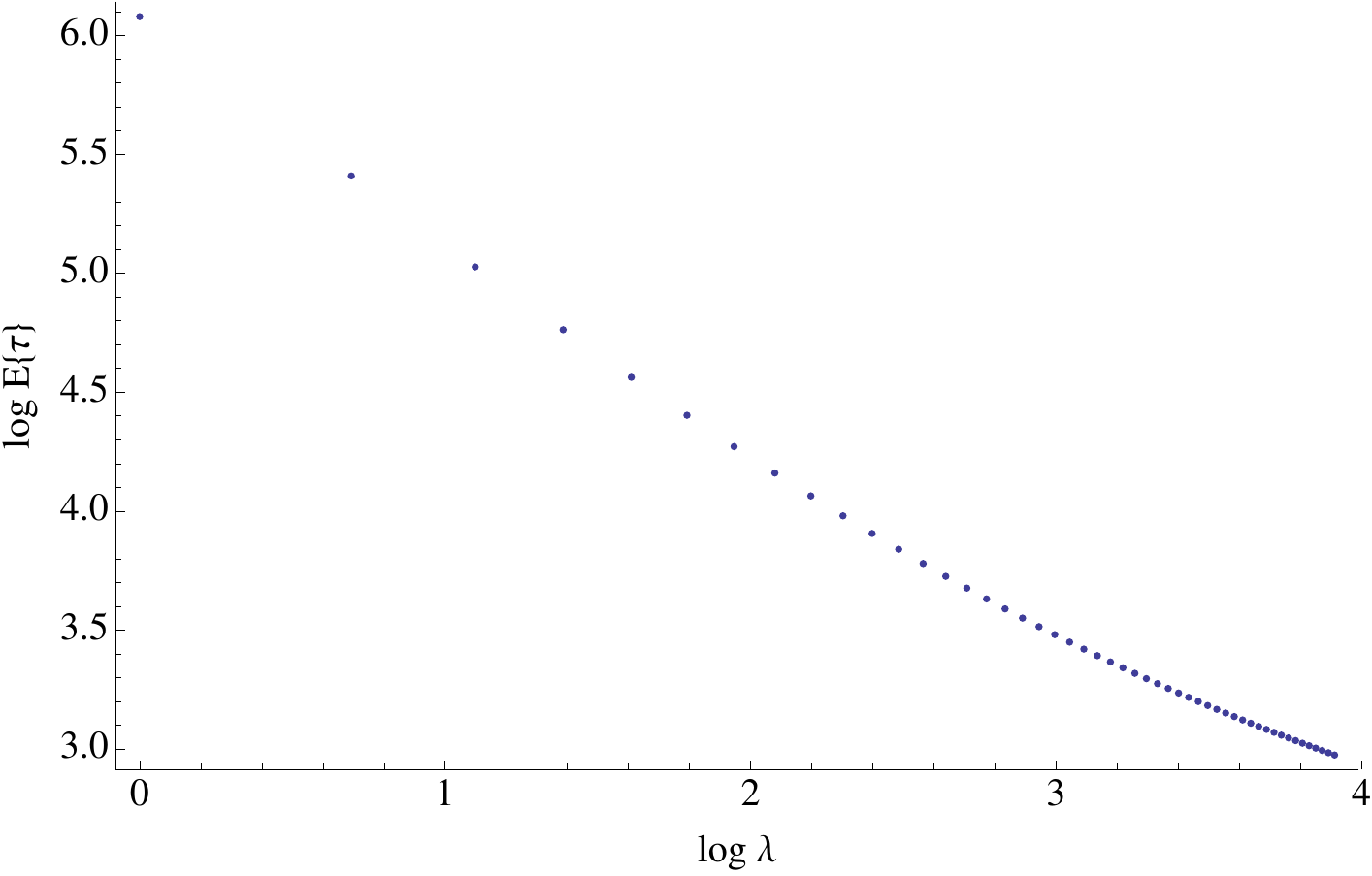}}
\caption{The expected runtime of a $(1+1)$ EA with $p=1/n$ and $n=50$ for $\lambda=1$ to 50.}
\label{fig:lambda}
\end{figure}

The log-log plot is almost a straight line, what suggests that we can express the expected runtime as a potential function of $\lambda$. The best linear regression model for the log-log plot is:
\begin{equation}
\log \Exp{\tau} \approx 5.78452 - 0.746412 \log \lambda ,
\end{equation}
or equivalently:
\begin{equation}
\Exp{\tau} \approx \frac{325.226}{\lambda^{0.746412}} .
\end{equation}

However, we cannot compare this model with the result of \cite{Jansen2005} of $O(n \log n/\lambda +n)$. Thus, we found the best fit model in the form $\Exp{\tau} = A + \frac{B}{\lambda}$ and we got:
\begin{equation}
\Exp{\tau} \approx 11.1306 +\frac{424.99}{\lambda } .
\end{equation}

The value of the constant $A$ is small enough to say that the expected runtime is approximately divided by $\lambda$ when we generate $\lambda$ offspring.

%

\section{Conclusions}
\label{sec:conclusions}

We analyzed the bit-flip mutation operator from the point of view of landscape theory. In particular, we derived closed-form formulas for all the statistical moments of the fitness distribution of a mutated solution. These moments can be expressed as a polynomial in $p$, the probability of flipping a bit. Using the moments we derived an expression for the probability mass function of the fitness value after applying bit-flip mutation to a given solution. The expression takes an elegant matrix form in which we can distinguish a problem-dependent part and an operator-dependent part. The problem-dependent part can be obtained using the elementary landscape decomposition of the objective function of the problem and their powers. The operator-dependent part depends only on the probability $p$.

We also derived the problem-dependent part for two well-known problems: Onemax and MAX-SAT. In the first case, the problem-dependent part is especially simple and efficient to compute. This allowed us to derive the exact expression for the runtime of an $(1+\lambda)$ EA for solving Onemax, finding a connection between landscape theory and runtime analysis. Using this expression we obtained the optimal probability for bit-flip mutation as a function of $n$, the number of bits.

It is possible to analyze other operators in the same way we did with bit-flip mutation. Thus, we think that an interesting future line of research could be the application of similar ideas to find the probability mass function of the distribution after the application of several chained operators. In particular, recent developments in landscape theory suggest that it is possible to analyze the fitness distribution of the offspring of two parent solutions when the uniform crossover is applied~\citep{Chicano2012crossover}. These results together with the connection between landscape theory and runtime analysis shown in this paper could provide a natural way of introducing crossover in the runtime results.

\section{Acknowledgements}

This work has been partially funded by the Spanish Ministry of Economy and Competitiveness and FEDER under contract TIN2011-28194 (the
roadME project), and by the Air Force Office of
Scientific Research, Air Force Materiel Command, USAF, under grant
number FA9550-08-1-0422. The U.S. Government is authorized to
reproduce and distribute reprints for Governmental purposes
notwithstanding any copyright notation thereon.

The authors would also like to thank the organizers and participants
of the seminars on Theory of Evolutionary Algorithms (10361 and 13271) at
Schlo\ss\ Dagstuhl - Leibniz-Zentrum f\"ur Informatik.

\bibliographystyle{apalike}
\bibliography{landscapes}
\end{document}
